\newtheorem{theorem}{Theorem}
\newtheorem{example}[theorem]{Example}
\newcommand*\colvec[3][]{
    \begin{pmatrix}\ifx\relax#1\relax\else#1\\\fi#2\\#3\end{pmatrix}
}
\renewcommand{\tilde}{\widetilde}
\renewcommand{\leq}{\leqslant}
\newcolumntype{C}{>{\centering\arraybackslash}p{0.72em}}
\newcommand{\prlsection}[1]{\vspace{6pt}\noindent\textbf{\textsc{#1.}}—}
\newcommand{\mcol}[2]{\eqmakebox[#1][c]{$#2$}} 
\begin{document}

\author{Zijian Liang}
\affiliation{International Center for Quantum Materials, School of Physics, Peking University, Beijing 100871, China}

\author{Yu-An Chen}
\email[E-mail: ]{yuanchen@pku.edu.cn}
\affiliation{International Center for Quantum Materials, School of Physics, Peking University, Beijing 100871, China}

\date{\today}
\title{Self-dual bivariate bicycle codes with transversal Clifford gates}

\begin{abstract}

Bivariate bicycle codes are promising candidates for high-threshold, low-overhead fault-tolerant quantum memories. Meanwhile, color codes are the most prominent self-dual CSS codes, supporting transversal Clifford gates that have been demonstrated experimentally. In this work, we combine these advantages and introduce a broad family of self-dual bivariate bicycle codes. These codes achieve higher encoding rates than surface and color codes while admitting transversal CNOT, Hadamard, and $S$ gates.
In particular, we enumerate weight-8 self-dual bivariate bicycle codes with up to $n \leq 200$ physical qubits, realized on twisted tori that enhance code distance and improve stabilizer locality. Representative examples include codes with parameters $[[n,k,d]]$: $[[16,4,4]]$, $[[40,6,6]]$, $[[56,6,8]]$, $[[64,8,8]]$, $[[120,8,12]]$, $[[152,6,16]]$, and $[[160,8,16]]$.

\end{abstract}

\maketitle



\prlsection{Introduction}
Fault-tolerant quantum computation relies on quantum error correction~\cite{Shor1995Scheme, Steane1996Error, Knill1997Theory, Calderbank1997Quantum, gottesman1997stabilizer, kitaev2003fault}. 
The Kitaev toric code stands out for its high threshold and rapid experimental progress~\cite{bravyi1998quantum, dennis2002topological, semeghini2021probing, Verresen2021PredictionTC, breuckmann2021quantum, bluvstein2022quantum, google2023suppressing, Google2023NonAbelian, Google2024surface, iqbal2023topological, iqbal2024NonAbelian}. 
Meanwhile, bivariate bicycle (BB) codes on small tori have exhibited markedly improved performance, often by nearly an order of magnitude over the toric code, while retaining low-density parity-check (LDPC) structure~\cite{Bravyi2024HighThreshold, wang2024coprime, Wang2024Bivariate, tiew2024low, wolanski2024ambiguity, gong2024toward, maan2024machine, cowtan2024ssip, shaw2024lowering, cross2024linear, voss2024multivariate, liang2024operator, berthusen2025toward, eberhardt2024logical, lin2025single, liang2025generalized, chen2025anyon, liang2025planar, leroux2025romanescocodesbiastailoredqldpc}. 
Because high-distance LDPC codes strongly suppress logical error rates below threshold, they can provide comparable protection using substantially fewer physical qubits~\cite{Bravyi2024HighThreshold}.

Color codes~\cite{bombin2006topological} in two dimensions provide a complementary approach: as self-dual CSS codes equivalent to two copies of the toric code via folding~\cite{kubica2015unfolding}, they inherit topological protection and, by self-duality, support transversal Clifford gates.
The color codes have been demonstrated experimentally, including implementations with open boundaries~\cite{Lacroix2025ColorCode} and tests on a small torus using the $[[18,4,4]]$ color code~\cite{wang2025demonstration}.

In this work, we introduce \emph{self-dual bivariate bicycle (BB) codes} that unify these approaches by combining high encoding rates, topological protection, and transversal Clifford gates. A systematic analysis and comprehensive search of self-dual BB codes is presented. It is shown that weight-6 self-dual BB codes reduce to multiple copies of the color code, which motivates a focus on weight-8 stabilizers. In the weight-8 family, the stabilizers are doubly even (i.e., have weight divisible by $4$), ensuring a transversal $S$ gate, self-duality guarantees a transversal Hadamard gate, and, as in any CSS code, CNOT is transversal between two code blocks~\cite{gottesman1997stabilizer}.


We enumerate weight-8 self-dual BB codes with parameters $[[n,k,d]]$, where $n$ is the number of physical qubits, $k$ the number of logical qubits, and $d$ the code distance. For each $n\le 200$, we report the code with the largest $kd^2/n$ (motivated by the Bravyi--Poulin--Terhal bound~\cite{bravyi2009no,Bravyi2010Tradeoffs}), as summarized in Tables~\ref{tab: n_k_d 1} and~\ref{tab: n_k_d 2}.\footnote{We report only cases with $k>4$; the $k=2$ and $k=4$ cases are already realized by the Kitaev toric code and the color code, respectively.}
Representative examples include the self-dual BB codes $[[16,4,4]]$, $[[40,6,6]]$, $[[56,6,8]]$, $[[64,8,8]]$, $[[120,8,12]]$, $[[152,6,16]]$, and $[[160,8,16]]$. To the best of our knowledge, these code constructions are new.

\begin{figure}[t]
\centering
\begin{overpic}[width=0.9 \linewidth]{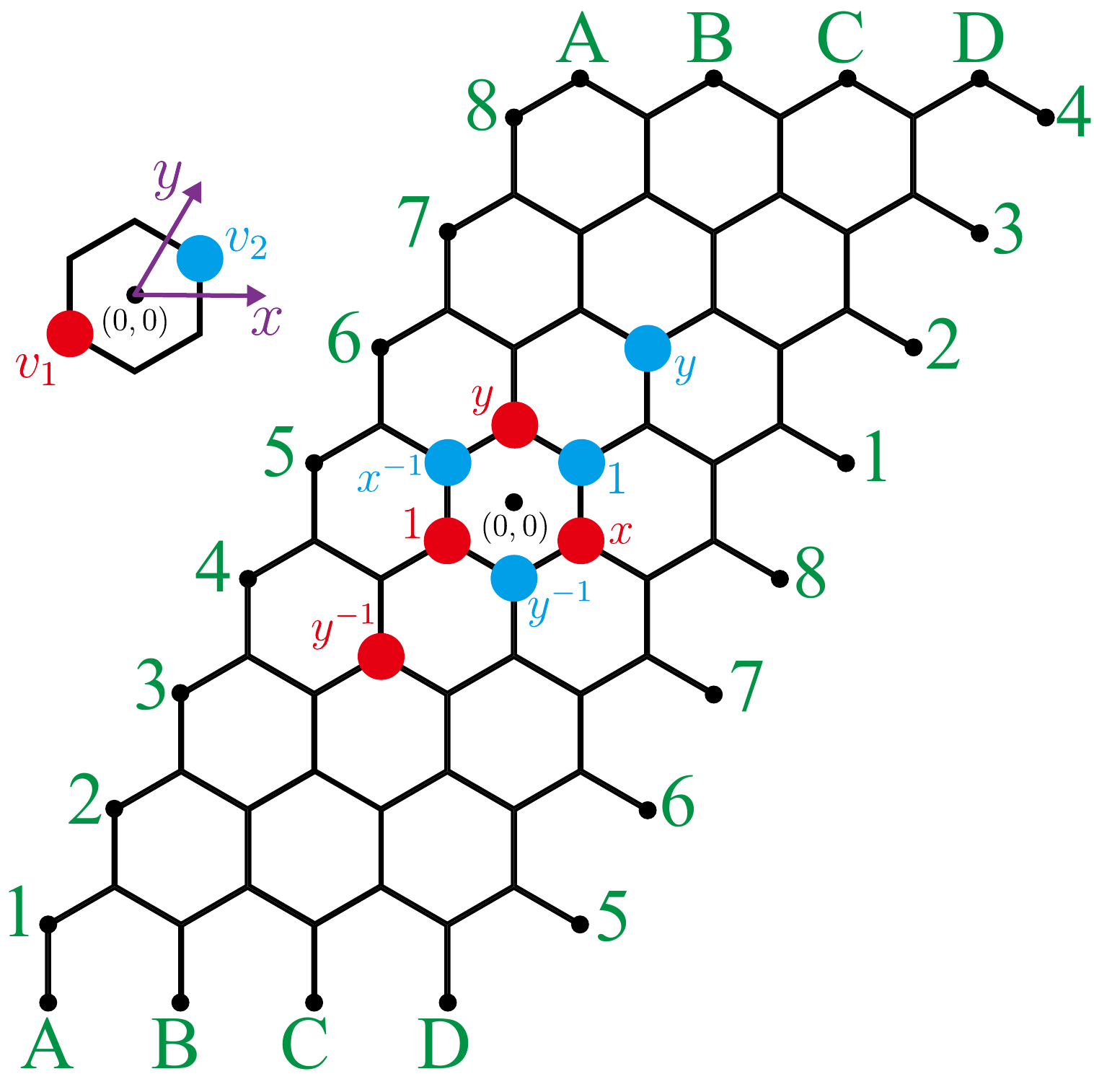}
     \put(65,5){\includegraphics[width=0.45\linewidth]{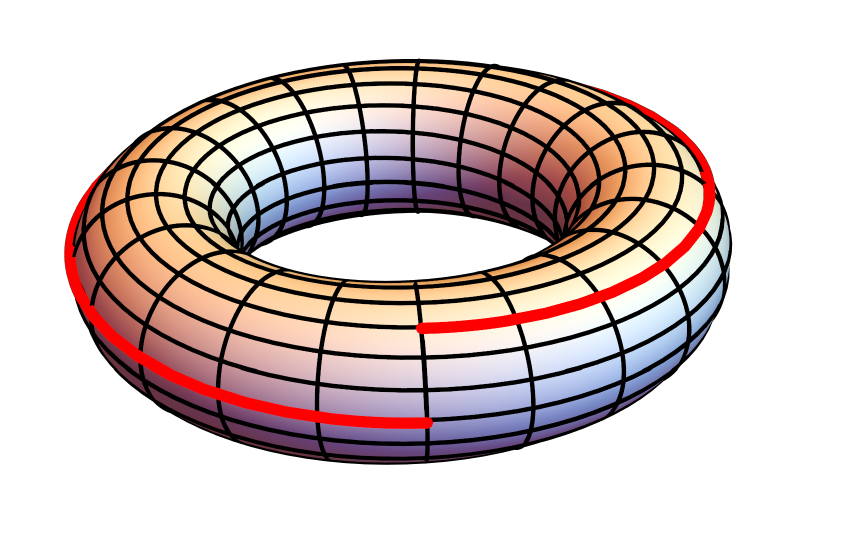}}
\end{overpic}
\caption{Stabilizer pattern of a weight-8 self-dual BB code on a twisted torus.     
    Red and blue dots indicate vertices defined by $f(x,y)=1+x+y+y^{-1}$ and $g(x,y)=\overline{f(x,y)} = 1+x^{-1}+y^{-1}+y$, respectively. 
    An $X$-stabilizer is the product of Pauli $X$ operators on all red and blue vertices (and similarly for $Z$). 
    The stabilizer possesses inversion symmetry about the plaquette center, ensuring commutation between $X$- and $Z$-stabilizers. 
    On a twisted torus with basis vectors $(0,8)$ and $(4,4)$, i.e., with boundary conditions $y^8=1$ and $x^4y^4=1$ (using the $x,y$ conventions of the top-left panel), this pattern realizes the $[[64,8,8]]$ code.
    The bottom-right panel shows the twisted torus with a longitudinal twist by a fraction of $2\pi$.
    The construction extends to arbitrary $f(x,y)$ and to twisted tori with basis vectors $\vec{a}_1$ and $\vec{a}_2$.}
\label{fig: self-dual BB code}
\end{figure}

\renewcommand{\arraystretch}{1.3}
\setlength{\tabcolsep}{0pt} 
\begin{table}[t]
\centering
\definecolor{mycolor1}{RGB}{255, 200, 100}  
\definecolor{mycolor2}{RGB}{200, 100, 200}
\definecolor{mycolor3}{RGB}{100, 180, 150}
\definecolor{mycolor4}{RGB}{100, 100, 150}
\definecolor{mycolor5}{RGB}{174, 217, 69}
\definecolor{mycolor6}{RGB}{250, 50, 200}
\definecolor{mycolor7}{RGB}{50, 250, 250}
\definecolor{mycolor8}{RGB}{250, 250, 50}
\begin{tabular}{|c|c|c|c|c|}
\hline
$[[n,k,d]]$ & $f(x,y)$ 
& $\vec{a}_1$   &$\vec{a}_2$   
&$\frac{kd^2}{n}$ 
\\ \hline

\rowcolor{mycolor8!60}$[[16 ,4 ,4 ]]$ & $1 + x + y + y^{-1}$&
$(0,4)$&$(2,2)$  
&4
\\ \hline
 
$[[24 ,8 ,4 ]]$ & $1 + x + x^{-1}y + xy$&
$(0,6)$&$(2,2)$   
&5.33
\\ \hline


\rowcolor{green!40}$[[30 ,6 ,5 ]]$ & $1 + x + x^2y + x^{-1}y$&
$(0,3)$&$(5,0)$   
&5
\\ \hline

\rowcolor{green!40}$[[32 ,12 ,4 ]]$ & $1 + x + x^2y + x^{-1}y$&
$(0,4)$&$(4,2)$   
&6
\\ \hline

\rowcolor{mycolor6!25} $[[36 ,10 ,4 ]]$ & $1 + x + x^{-1}y^{-1} + x^{-1}y $&
$(0,3)$&$(6,0)$   
&4.44
\\ \hline

$[[40 ,6 ,6 ]]$ & $1 + x + xy^{-1} + x^{-1}$&
$(0,4) $&$(5,1)$   
&\textbf{5.4}
\\ \hline

\rowcolor{green!40}$[[42 ,6 ,6 ]]$ & $1 + x + x^2y + x^{-1}y$&
$(0,3)$&$(7,0)$   
&5.14
\\ \hline

\rowcolor{mycolor7!70}$[[48 ,16 ,4 ]]$ & $1 + x + y^2 + x^{-2}$&
$(0,4)$&$(6,0)$   
&5.33
\\ \hline

\rowcolor{green!40}$[[50 ,10 ,5 ]]$ & $1 + x + x^2y + x^{-1}y$&
$(0,5)$&$(5,0)$   
&5
\\ \hline


\rowcolor{mycolor6!25} $[[54 ,10 ,6 ]]$& $1 + x + x^{-1}y^{-1} + x^{-1}y $&
$(0,9)$&$(3,3)$    
&6.67
\\ \hline

\rowcolor{green!40}$[[56 ,6 ,8 ]]$ & $1 + x + x^2y + x^{-1}y$&
$(0,7)$&$(4,3)$   
&\textbf{6.86}
\\ \hline

\rowcolor{green!40}$[[60 ,12 ,5 ]]$ & $1 + x + x^2y + x^{-1}y$&
$(0,6)$&$(5,0)$   
&5
\\ \hline
  
\rowcolor{mycolor8!60}$[[64 ,8 ,8 ]]$ & $1 + x + y + y^{-1}$&
$(0,8)$&$(4,4)$   
&\textbf{8}
\\ \hline

\rowcolor{mycolor7!70}$[[66 ,6 ,8 ]]$ & $1 + x + y^2 + x^{-2}$&
$(0,11)$&$(3,-3)$   
&5.82
\\ \hline


\rowcolor{green!40}$[[70 ,10 ,6 ]]$ & $1 + x + x^2y + x^{-1}y$&
$(0,5)$&$(7,0)$   
&5.14
\\ \hline

\rowcolor{green!40}$[[72 ,12 ,6 ]]$ & $1 + x + x^2y + x^{-1}y$&
$(0,6)$&$(6,3)$   
&6
\\ \hline


\rowcolor{mycolor7!70}$[[78 ,6 ,10 ]]$ & $1 + x + y^2 + x^{-2}$&
~$(0,13)$~ & ~$(3,-3)$~   
&~7.69~
\\ \hline

\rowcolor{green!40}$[[80 ,10 ,8 ]]$ & $1 + x + x^2y + x^{-1}y$&
$(0,8)$&$(5,4)$   
&8
\\ \hline

\rowcolor{mycolor6!25} $[[84 ,6 ,10 ]]$ & $1 + x + x^{-1}y^{-1} + x^{-1}y $&
$(0,21)$&$(2,5)$   
&7.14
\\ \hline

        
\rowcolor{red!35}$[[90 ,18 ,6 ]]$ & $1 + x + x^3y + x^{-1}y$&
$(0,15)$&$(3,6)$   
&7.2
\\ \hline


\rowcolor{mycolor6!25} $[[96 ,12 ,8 ]]$ & ~$1 + x + x^{-1}y^{-1} + x^{-1}y $~ &
$(0,12)$&$(4,4)$   
&8
\\ \hline

\rowcolor{green!40}$[[98 ,14 ,6 ]]$ & $1 + x + x^2y + x^{-1}y$&
$(0,7)$&$(7,0)$   
&5.14
\\ \hline

\rowcolor{green!40}~$[[100 ,12 ,8 ]]$~ & $1 + x + x^2y + x^{-1}y$&
$(0,5)$&$(10,0)$    
&7.68
\\ \hline

\rowcolor{mycolor7!70}$[[102 ,6 ,10 ]]$ & $1 + x + y^2 + x^{-2}$&
$(0,17)$&$(3,7)$     
&5.88
\\ \hline

\rowcolor{mycolor5!30}$[[104 ,6 ,12 ]]$ & $1 + x + y + x^{-1}y^{-1}$&
$(0,26)$&$(2,8)$  
&8.31
\\ \hline

$[[108 ,20 ,6 ]]$ & $1 + x + xy^2 + x^{-1}y^2$&
$(0,18) $&$(3,6)$   
&6.67
\\ \hline

\rowcolor{green!40}$[[110 ,10 ,8 ]]$ & $1 + x + x^2y + x^{-1}y$&
$(0,5)$&$(11,0)$   
&5.82
\\ \hline
\end{tabular}
\caption{Optimal weight-8 self-dual BB codes $[[n,k,d]]$ with $n \leq 110$. 
Each stabilizer code, defined by $f(x,y)=1+x+x^a y^b+x^c y^d$, is placed on a twisted torus specified by basis vectors $\vec{a}_1$ and $\vec{a}_2$, as illustrated in Fig.~\ref{fig: self-dual BB code}.
Rows with the same color correspond to identical stabilizers $f(x,y)$ realized on different tori. 
Code distances are computed exactly using the integer programming method of Refs.~\cite{landahl2011fault, Bravyi2024HighThreshold}. Boldface $kd^2/n$ exceeds all values at smaller $n$.
}
\label{tab: n_k_d 1}
\end{table}

\renewcommand{\arraystretch}{1.3}
\setlength{\tabcolsep}{0pt} 
\begin{table}[t]
\centering
\definecolor{mycolor1}{RGB}{255, 200, 100}  
\definecolor{mycolor2}{RGB}{200, 100, 200}
\definecolor{mycolor3}{RGB}{100, 180, 150}
\definecolor{mycolor4}{RGB}{100, 100, 150}
\definecolor{mycolor5}{RGB}{174, 217, 69}
\definecolor{mycolor6}{RGB}{250, 50, 200}
\definecolor{mycolor7}{RGB}{50, 250, 250}
\begin{tabular}{|c|c|c|c|c|}
\hline
$[[n,k,d]]$ & $f(x,y)$ 
& $\vec{a}_1$   &$\vec{a}_2$   
&$\frac{kd^2}{n}$ 
\\ \hline

\rowcolor{blue!30}$[[112 ,6 ,12 ]]$ & $1 + x + x^2y + x^{-1}y^2$&
$(0,7)$&$(8,0)$   
&7.71
\\ \hline

\rowcolor{mycolor7!70}$[[114 ,6 ,10 ]]$ & $1 + x + y^2 + x^{-2}$&
$(0,19)$&$(3,5)$    
&5.26
\\ \hline


$[[120 ,8 ,12 ]]$ & $1+x+y+x^{-2}y^{-2} $&
$(0,6)$&$(10,0)$    
&\textbf{9.6}
\\ \hline

     
$[[126 ,22 ,6 ]]$ & $1 + x + x^{-2}y + x^{-1}y^{-2}$&
$ (0,21) $&$ (3,6)$   
&6.29
\\ \hline

\rowcolor{green!40}$[[128 ,16 ,8 ]]$ & $1 + x + x^2y + x^{-1}y$&
$(0,8)$&$(8,4)$    
&8
\\ \hline

\rowcolor{red!35}$[[130 ,10 ,10 ]]$ & $1 + x + x^3y + x^{-1}y$&
$(0,5)$&$(13,1)$   
&7.69
\\ \hline

\rowcolor{brown!40}$[[132 ,8 ,12 ]]$ & $1 + x + y^2 + x^{-1}y^{-1}$&
$(0,33)$&$(2,11)$   
&8.73
\\ \hline

\rowcolor{blue!30}$[[136 ,6 ,14 ]]$ & $1 + x + x^2y + x^{-1}y^2$&
$(0,17)$&$(4,4)$  
&8.65
\\ \hline

\rowcolor{mycolor7!70}$[[138 ,6 ,12 ]]$ & $1 + x + y^2 + x^{-2}$&
$(0,23)$&$(3,5)$   
&6.26
\\ \hline

\rowcolor{green!40}$[[140 ,16 ,8 ]]$ & $1 + x + x^2y + x^{-1}y$&
$(0,7)$&$(10,0)$   
&7.31
\\ \hline

\rowcolor{mycolor4!30}$[[144 ,6 ,14 ]]$ & $1 + x + y + y^{-2}$&
$(0,24)$&$(3,11)$   
&8.17
\\ \hline


\rowcolor{mycolor6!25} $[[150 ,6 ,14 ]]$ & $1 + x + x^{-1}y^{-1} + x^{-1}y $&
$(0,15)$&$(5,7)$    
&7.84
\\ \hline

\rowcolor{blue!30}$[[152 ,6 ,16 ]]$ & $1 + x + x^2y + x^{-1}y^2$&
$(0,19)$&$(4,6)$  
&~\textbf{10.11}~
\\ \hline

\rowcolor{green!40}$[[154 ,14 ,8 ]]$ & $1 + x + x^2y + x^{-1}y$&
$(0,7)$&$(11,0)$  
&5.82
\\ \hline

\rowcolor{mycolor7!70}$[[156 ,12 ,10 ]]$ & $1 + x + y^2 + x^{-2}$&
$(0,26)$&$(3,10)$   
&7.69
\\ \hline
        
$[[160 ,8 ,16 ]]$ & $1+x+x^2y^2+x^{-1}y$&
$(0,10)$&$(8,0)$   
&\textbf{12.8}
\\ \hline

\rowcolor{mycolor4!30}$[[162 ,6 ,14 ]]$ & $1 + x + y + y^{-2}$&
$(0,27)$&$(3,12)$   
&7.26
\\ \hline

        
\rowcolor{mycolor5!30}$[[168 ,6 ,16 ]]$ & $1 + x + y + x^{-1}y^{-1}$&
$(0,42)$&$(2,10)$   
&9.14
\\ \hline

\rowcolor{red!35}$[[170 ,10 ,10 ]]$ & $1 + x + x^3y + x^{-1}y$&
$(0,5)$&$(17,0)$   
&5.88
\\ \hline


\rowcolor{mycolor7!70}$[[174 ,6 ,14 ]]$ & $1 + x + y^2 + x^{-2}$&
$(0,29)$&$(3,5)$    
&6.76
\\ \hline

\rowcolor{mycolor4!30}$[[176 ,8 ,16 ]]$ & $1 + x + y + y^{-2}$&
$(0,44)$&~$(2,30)$~   
&11.64
\\ \hline

\rowcolor{mycolor6!25} $[[180 ,10 ,12 ]]$ & $1 + x + x^{-1}y^{-1} + x^{-1}y $&
$(0,15)$&$(6,6)$    
&8
\\ \hline
        
$[[182 ,14 ,10 ]]$ & $1 + x + x^3y + x^{-2}y$&
$(0,7)$&$(13,0)$   
&7.69
\\ \hline

\rowcolor{blue!30}$[[184 ,6 ,16 ]]$ & $1 + x + x^2y + x^{-1}y^2$&
$(0,23)$&$(4,4)$  
&8.35
\\ \hline

\rowcolor{mycolor7!70}$[[186 ,6 ,14 ]]$ & $1 + x + y^2 + x^{-2}$&
$(0,31)$&$(3,5)$    
&6.32
\\ \hline


\rowcolor{red!35}$[[190 ,10 ,10 ]]$ & $1 + x + x^3y + x^{-1}y$&
$(0,5)$&$(19,0)$   
&5.26
\\ \hline
        
\rowcolor{brown!40}$[[192 ,12 ,12 ]]$ & $1 + x + y^2 + x^{-1}y^{-1}$&
~$(0,48)$~ &$(2,21)$   
&9
\\ \hline

\rowcolor{red!35}$[[196 ,14 ,10 ]]$ & $1 + x + x^3y + x^{-1}y$&
$(0,7)$&$(14,0)$   
&7.14
\\ \hline
       
\rowcolor{mycolor6!25} ~$[[198 ,10 ,12 ]]$~ & ~$1 + x + x^{-1}y^{-1} + x^{-1}y$~ &
~$(0,33)$~&$(3,9)$   
&7.27
\\ \hline

$[[200 ,12 ,12 ]]$ & $1 + x + x^{-1}y + xy^2$&
$(0,50)$&$(2,14)$   
&8.64
\\ \hline
\end{tabular}
\caption{Continuation of Table~\ref{tab: n_k_d 1} for $110 < n \leq 200$.}
\label{tab: n_k_d 2}
\end{table}

\prlsection{Algebraic–geometric methods}
We adopt an algebraic framework to analyze translation-invariant quantum codes on lattices~\cite{haah2016algebraic}. Ring-theoretic methods, in particular Gr\"obner basis techniques, simplify computations for CSS codes and provide a systematic classification of anyonic excitations and thereby determine the code properties~\cite{liang2025generalized, chen2025anyon}.

For concreteness, we focus on the honeycomb lattice, where one qubit is placed on each vertex. We briefly review the polynomial representation of Pauli operators~\cite{haah_module_13}. A unit cell contains two vertices, $v_1$ and $v_2$, whose Pauli operators are represented by four-dimensional vectors:
\renewcommand{\arraystretch}{1.3}
\begin{equation}
    \mathcal{X}_{v_1} =
    \begin{bmatrix}1 \\ 0 \\ \hline 0 \\ 0\end{bmatrix},~
    \mathcal{Z}_{v_1} =
    \begin{bmatrix}0 \\ 0 \\ \hline 1 \\ 0\end{bmatrix},~
    \mathcal{X}_{v_2} =
    \begin{bmatrix}0 \\ 1 \\ \hline 0 \\ 0\end{bmatrix},~
    \mathcal{Z}_{v_2} =
    \begin{bmatrix}0 \\ 0 \\ \hline 0 \\ 1\end{bmatrix}.
    \label{eq: X1 Z1 X2 Z2 definition}
\end{equation}
Translations by $(n,m)$ lattice units are implemented by multiplying the operator by $x^n y^m$ (with $n,m \in \mathbb{Z}$), as illustrated in the top-left panel of Fig.~\ref{fig: self-dual BB code}.
Products of Pauli operators correspond to sums of vectors, so the Pauli algebra forms a module over the Laurent polynomial ring
\begin{equation}
    R=\mathbb{Z}_2[x^{\pm1},y^{\pm1}].
\end{equation}
A translation-invariant CSS code is specified by two polynomials $f,g\in R$:
\renewcommand{\arraystretch}{1.35}
\begin{equation}
    S_X = 
    \begin{bmatrix} f(x,y) \\ g(x,y) \\ \hline 0 \\ 0 \end{bmatrix}, 
    \qquad
    S_Z = 
    \begin{bmatrix} 0 \\ 0 \\ \hline \overline{g(x,y)} \\ \overline{f(x,y)} \end{bmatrix},
    \label{eq: stabilizer}
\end{equation}
where $\overline{(\cdot)}$ denotes the antipode map $x^ny^m \mapsto x^{-n}y^{-m}$. All lattice translations of $S_X$ and $S_Z$ form the stabilizer group. A representative example of a self-dual BB code with $f(x,y)=1+x+y+y^{-1}$ and $g(x,y)=\overline{f(x,y)}$ is shown in Fig.~\ref{fig: self-dual BB code}.

The topological order condition is satisfied when $f(x,y)$ and $g(x,y)$ are coprime,
which ensures that any local operator commuting with all stabilizers is itself a stabilizer~\cite{eberhardt2024logical}.
Moreover, the maximal number of logical qubits on a torus equals the number of anyon generators in the underlying topological order~\cite{Witten1989Jones, Wen1995Topological, watanabe2023ground, liang2025generalized, zhang2025programmableanyon}. It can be computed as the codimension of the ideal generated by $f$ and $g$:
\begin{equation}
    k_\mathrm{max} = 2\, \dim \Bigg(
    \frac{\mathbb{Z}_2[x^{\pm1},y^{\pm1}]}{\langle f,\,g\rangle}
    \Bigg),
    \label{eq: maximal logical dimension}
\end{equation}
which is determined by the area of the corresponding Newton polytope~\cite{chen2025anyon}.
For a torus with twisted periodic boundary conditions specified by $\vec{a}_1 = (0, \alpha)$ and $\vec{a}_2 = (\beta,\gamma)$, the logical dimension is given by~\cite{liang2025generalized}
\begin{equation}
    k = 2 \dim \left(
    \frac{\mathbb{Z}_2[x^{\pm1},y^{\pm1}]}{\langle f(x,y),\,g(x,y),\,y^\alpha - 1,\,x^\beta y^\gamma - 1 \rangle}
    \right).
    \label{eq: k formula}
\end{equation}

\prlsection{Search for self-dual bivariate bicycle codes}
In what follows, we focus on self-dual bivariate bicycle codes satisfying $\overline{f}=g$, as illustrated in Fig.~\ref{fig: self-dual BB code}. On the lattice, this condition enforces inversion symmetry of each stabilizer about the center of the plaquette, ensuring that the $X$- and $Z$-stabilizers commute. For example, the standard color code on the honeycomb lattice corresponds to $f=1+x+y$, representing the product of Pauli $X$ or $Z$ operators on the six vertices of a hexagonal plaquette.

We first consider weight-6 stabilizers. The following theorem specifies their logical dimensions.
\begin{theorem}[Maximal logical dimension for self-dual BB codes]
\label{thm:bb-max-k}
Let $a,b,c,d \in \mathbb{Z}$, and define the polynomials
\begin{eqs}
    f(x,y) &= 1 + x^{a}y^{b} + x^{c}y^{d}, \\
    g(x,y) &= 1 + x^{-a}y^{-b} + x^{-c}y^{-d}~,
\end{eqs}
which generate the stabilizer~\eqref{eq: stabilizer} of a self-dual bivariate bicycle code. 
Setting $\Delta \coloneqq ad-bc$, the maximal number of logical qubits is
\begin{equation}
    k_{\max} \;=\; 4\,|\Delta| \, .
\end{equation}
Moreover, this maximum is realized on the twisted torus with periodicities specified by the basis vectors
\begin{equation}
    (3a,\,3b), \qquad (c+a,\,d+b),
    \label{eq: twisted TC a1 a2}
\end{equation}
i.e., the coordinate $(n_x,n_y)$ is identified with both $(n_x+3a,\,n_y+3b)$ and $(n_x+c+a,\,n_y+d+b)$.
\end{theorem}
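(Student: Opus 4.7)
The plan is to reduce the general theorem to the honeycomb color code via a monomial change of variables.

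\emph{Step 1 (Monomial substitution).} Introduce $u \coloneqq x^a y^b$ and $v \coloneqq x^c y^d$, so that $f = 1 + u + v$ and $g = 1 + u^{-1} + v^{-1}$ become the stabilizer polynomials of the ordinary color code, now inside the Laurent polynomial subring $R' \coloneqq \mathbb{Z}_2[u^{\pm 1}, v^{\pm 1}] \subset R \coloneqq \mathbb{Z}_2[x^{\pm 1}, y^{\pm 1}]$. Since $\Delta \neq 0$, the sublattice $\mathbb{Z}(a,b) + \mathbb{Z}(c,d) \subset \mathbb{Z}^2$ has index $|\Delta|$, which realizes $R$ as a \emph{free} $R'$-module of rank $|\Delta|$: an $R'$-basis is given by any family $\{x^{p_i} y^{q_i}\}_{i=1}^{|\Delta|}$ of monomials whose exponents represent the $|\Delta|$ cosets.

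\emph{Step 2 (Maximal logical dimension).} Compute $R'/\langle f, g\rangle$ directly. The identity $u\cdot f + uv\cdot g = u^2 + v$ (in characteristic two) combined with $f = 1+u+v$ gives $u^2 + u + 1 = u\cdot f + uv\cdot g + f \in \langle f, g\rangle$, and modulo $f$ we can eliminate $v$ to obtain $R'/\langle f, g\rangle \cong \mathbb{Z}_2[u]/\langle u^2 + u + 1\rangle \cong \mathbb{F}_4$, of $\mathbb{Z}_2$-dimension $2$. Tensoring over $R'$ with the free extension $R$ multiplies the $\mathbb{Z}_2$-dimension by the rank, so $\dim_{\mathbb{Z}_2}\!\bigl(R/\langle f, g\rangle\bigr) = 2|\Delta|$, and formula (\ref{eq: maximal logical dimension}) then yields $k_{\max} = 4|\Delta|$.

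\emph{Step 3 (The proposed torus saturates the bound).} By the natural extension of formula (\ref{eq: k formula}) to arbitrary basis vectors, it suffices to show that the two boundary generators $x^{3a}y^{3b} - 1$ and $x^{a+c}y^{b+d} - 1$, i.e., $u^3 - 1$ and $uv - 1$, already lie in $\langle f, g\rangle_{R'}$. The identity $uv\cdot g + f = uv + 1$ (characteristic two) settles $uv - 1$; and $u^3 - 1 = (u+1)(u^2 + u + 1)$ together with Step~2 settles $u^3 - 1$. Hence the augmented ideal coincides with $\langle f, g\rangle$, its quotient still has $\mathbb{Z}_2$-dimension $2|\Delta|$, and $k = 4|\Delta| = k_{\max}$ on this torus.

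\emph{Main obstacle.} The only nontrivial input is the freeness of $R$ over $R'$ under the monomial embedding determined by $\Delta \neq 0$; once this is established, the entire trinomial problem collapses onto the color-code calculation and the rest is short ideal arithmetic in $\mathbb{F}_4$. In hindsight, the basis vectors $(3a, 3b)$ and $(a+c, b+d)$ are the natural choice because they are the lifts to $\mathbb{Z}^2$ of the two defining relations $u^3 = 1$ and $uv = 1$ of the annihilator of $R'/\langle f, g\rangle \cong \mathbb{F}_4$, so the theorem is really the statement that the smallest twisted torus compatible with the color-code quotient sits on exactly these periods.
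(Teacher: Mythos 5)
Your proposal is correct, and it reaches the same structural conclusions as the paper (the relations $u^{2}+u+1$, $u^{3}-1$, and $uv-1$ in the ideal, hence the periods $(3a,3b)$ and $(a+c,b+d)$), but the key counting step is genuinely different. The paper also substitutes $M=x^{a}y^{b}$, $N=x^{c}y^{d}$, but then runs Buchberger's algorithm to get the reduced Gr\"obner basis $\{N+M^{2},\,M^{2}+M+1\}$ and invokes the BKK theorem: the leading terms are read back as the exponent vectors $(c,d)$ and $(2a,2b)$ in the original lattice, and the mixed area of these Newton segments gives $\dim \mathbb{Z}_2[x^{\pm1},y^{\pm1}]/\langle f,g\rangle = 2|\Delta|$. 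You instead compute the quotient entirely inside the subring $R'=\mathbb{Z}_2[u^{\pm1},v^{\pm1}]$, identify it explicitly with $\mathbb{F}_4$ (dimension $2$), and import the factor $|\Delta|$ through the observation that $R$ is a free $R'$-module of rank equal to the index of the exponent sublattice, so that $R/\langle f,g\rangle_R\cong (R'/\langle f,g\rangle_{R'})^{\oplus|\Delta|}$. This is a clean, purely module-theoretic replacement for the Gr\"obner-plus-BKK step; it is more elementary and makes transparent why the answer is ``$|\Delta|$ copies of the color-code quotient,'' which matches Example~\ref{example: color code}. The paper's route, on the other hand, is the algorithmic template (Gr\"obner basis, then mixed volume of the initial ideal) that continues to work when no closed-form quotient such as $\mathbb{F}_4$ is available, e.g., in the weight-8 searches later in the paper. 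Two minor points, neither a gap relative to the paper: your freeness argument (and hence the whole reduction) silently assumes $\Delta\neq 0$, exactly as the paper's BKK step does, since otherwise the quotient is not finite-dimensional; and your Step 3 appeals to the extension of Eq.~\eqref{eq: k formula} to arbitrary basis vectors, which is the same implicit appeal the paper makes when it converts $M^{3}=1$, $NM=1$ into the periodicities of Eq.~\eqref{eq: twisted TC a1 a2}.
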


\begin{example}
    As a concrete example, take the simplest choice $a=d=1$ and $b=c=0$, for which the stabilizers are $f=\overline{g}=1+x+y$, corresponding to the well-known color code~\cite{eberhardt2024pruning}. On the twisted torus with basis vectors $(3,0)$ and $(1,1)$, the construction yields the $[[6,4,2]]$ code with stabilizers
    \begin{equation}
    \begin{aligned}
    S_1 &= \mcol{c1}{X_1}\mcol{c2}{X_2}\mcol{c3}{X_3}\mcol{c4}{X_4}\mcol{c5}{X_5}\mcol{c6}{X_6},\\
    S_2 &= \mcol{c1}{Z_1}\mcol{c2}{Z_2}\mcol{c3}{Z_3}\mcol{c4}{Z_4}\mcol{c5}{Z_5}\mcol{c6}{Z_6}.
    \end{aligned}
    \end{equation}
    Alternatively, placing the same stabilizer polynomials on a $3\times 3$ torus with basis vectors $(3,0)$ and $(0,3)$ produces the $[[18,4,4]]$ code.
    More generally, for parameters $a,b,c,d$ with $|\Delta|>1$, the construction produces $|\Delta|$ independent copies of the color code, such as $[[12,8,2]]$ and $[[36,8,4]]$.
\label{example: color code}
\end{example}

\begin{figure*}[tbh]
    \centering
    \subfigure[~$\alpha$ logical pattern.]{\includegraphics[scale=0.25]{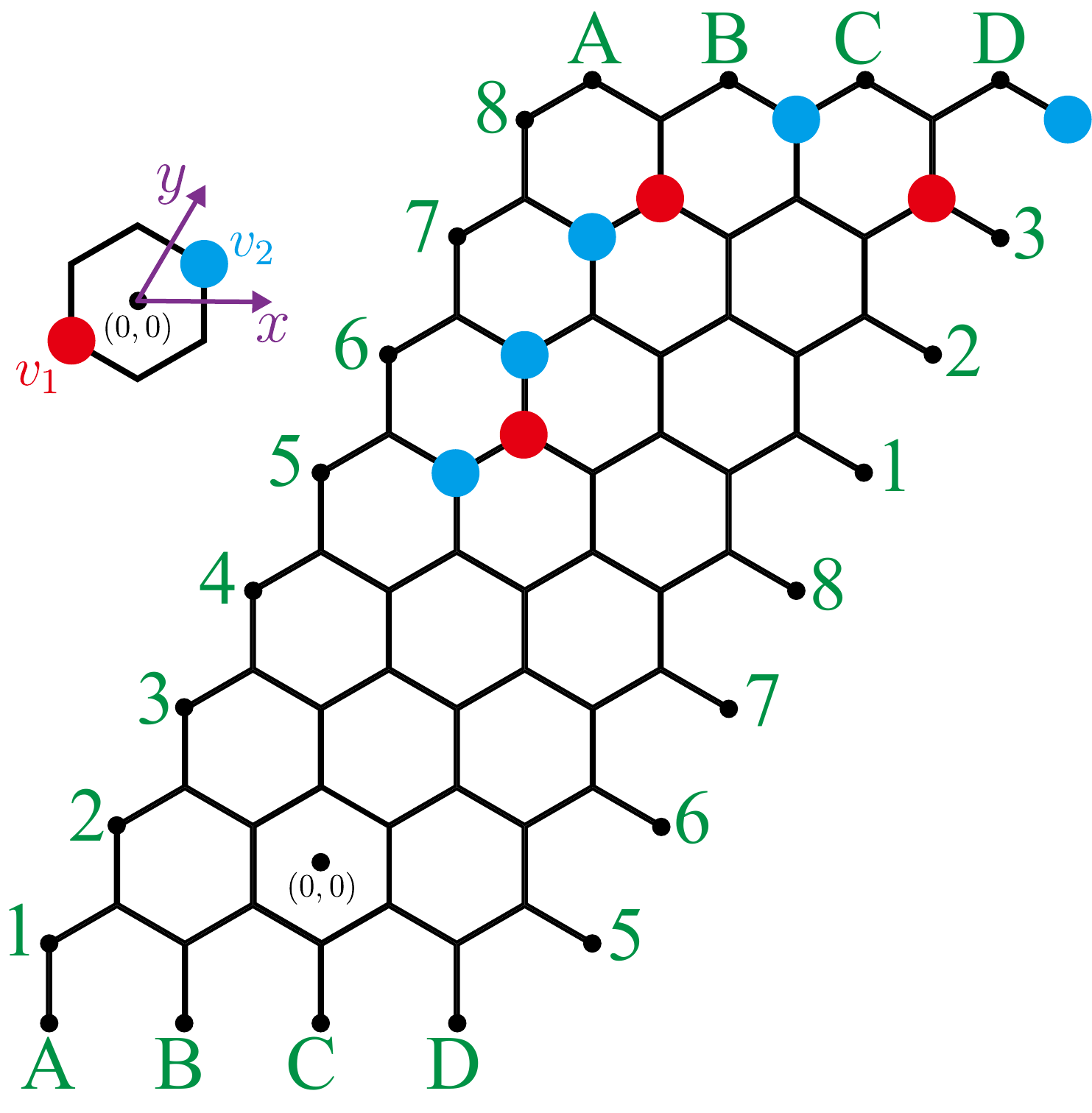}}
    \hspace{2em}
    \subfigure[~$\beta$ logical pattern.]{\includegraphics[scale=0.25]{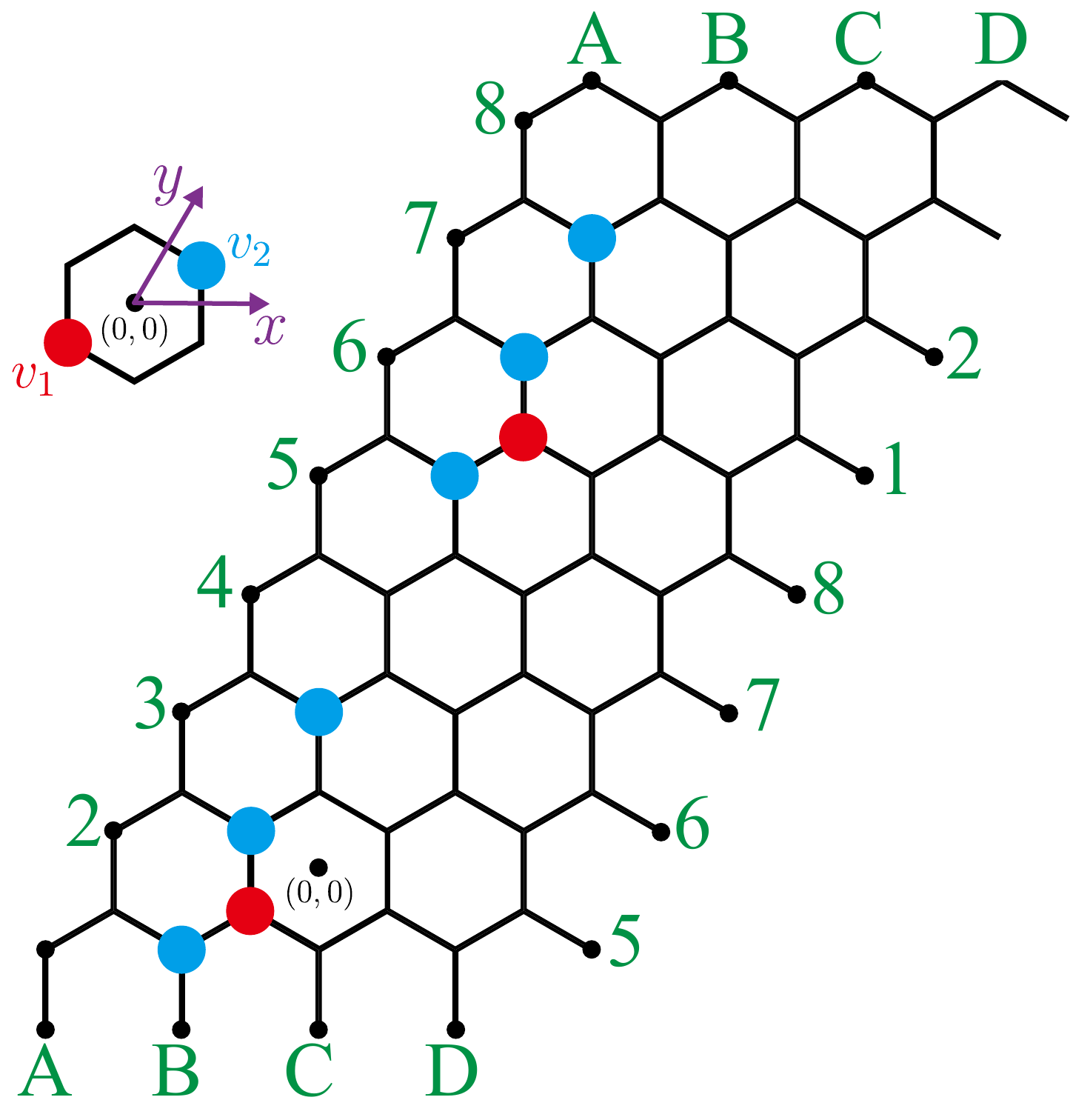}}
    \caption{For the $[[64,8,8]]$ code in Fig.~\ref{fig: self-dual BB code}, we show two representative support patterns for logical operators. All logical operators are generated by placing Pauli $X$ or $Z$ on the vertices of either pattern and their translations. One explicit choice of logical $\overline X_{1},\ldots,\overline X_{8}$ and $\overline Z_{1},\ldots,\overline Z_{8}$ is listed in Table~\ref{tab: logical pair}.}
    \label{fig: [[64, 8, 8]] code}
\end{figure*}

\begin{proof}[Proof of Theorem~\ref{thm:bb-max-k}]
Introduce $M=x^{a}y^{b}$ and $N=x^{c}y^{d}$, so that
\begin{equation}
f = 1+M+N~, \qquad g = 1+M^{-1}+N^{-1}~.
\end{equation}
We work over the Laurent polynomial ring $R'=\mathbb{Z}_2[M^{\pm1},N^{\pm1}]$ with lexicographic order $N \succ M$.

\emph{Buchberger algorithm~\cite{Buchberger1965algorithm}:}
Let $I:=\langle f,g\rangle$. Over $\mathbb{Z}_2$, addition and subtraction coincide. We first compute
\[
M\big((Ng)+f\big)=N+M^2\in I.
\]
Therefore, $I=\langle f,\,N+M^2\rangle$. Moreover,
\[
f+(N+M^2)=M^2+M+1,
\]
so we could replace $f$ by $M^2+M+1$ and obtain the Gr\"obner basis candidate
\begin{equation}
\mathcal{G}=\{\,N+M^2,\;\;M^2+M+1\,\}.
\end{equation}
The leading terms of $\mathcal{G}$ are
\begin{equation}
    \operatorname{LT}(N+M^2)=N,\text{ and } \operatorname{LT}(M^2+M+1)=M^2~.
\label{eq: leading terms}
\end{equation}
The corresponding $S$-polynomial is
\begin{eqs*}
&~~S(N+M^2,\,M^2+M+1) \\
&= M^2(N+M^2)+N(M^2+M+1) 
= NM+N+M^4~,
\end{eqs*}
which reduces to zero using $N+M^2$ and $M^2+M+1$. Hence, $\mathcal{G}$ is a Gr\"obner basis of $I$.

Since $M^2+M+1 \in I$, multiplying by $M + 1$ gives  
\[
(M+1)(M^2+M+1) = M^3 + 1 \in I~.
\]
Likewise, starting from $N + M^2 \in I$, multiplying by $M$ and adding $M^3 + 1$ yields  
\[
M(N + M^2) + (M^3 + 1) = NM + 1 \in I~.
\]
Therefore, in the quotient ring $R/I$, one obtains $M^3 = 1$ and $NM = 1$.

\emph{Dimension count:}
For a zero–dimensional Laurent ideal $I\subset\mathbb{Z}_2[x^{\pm1},y^{\pm1}]$, we can compute the quantity
\begin{equation}
    \dim\left(\frac{\mathbb{Z}_2[x^{\pm1},y^{\pm1}]}{I}\right)
\end{equation}
via the Bernstein–Khovanskii–Kushnirenko (BKK) theorem~\cite{Bernshtein1975, Sturmfels2002}.
BKK identifies this dimension with the mixed area of the Newton polytopes of the initial ideal $\operatorname{in}(I)$, generated by the leading terms of any Gr\"obner basis of $I$.
In our case, the leading terms~\eqref{eq: leading terms} correspond to the lattice vectors $(c,d)$ and $(2a,2b)$, so the mixed area of the two Newton segments is the area of the parallelogram they span, $2|\Delta|$. Hence, by Eq.~\eqref{eq: maximal logical dimension}, $k_{\max}=4|\Delta|$. 
Finally, the relations $M^3=1$ and $NM=1$ determine the twisted–torus periodicities in Eq.~\eqref{eq: twisted TC a1 a2}, completing the proof.
\end{proof}

As illustrated in Example~\ref{example: color code}, the weight-6 self-dual BB codes reduce either to the color code or to multiple decoupled copies of it. To obtain improved code parameters, it is therefore necessary to move beyond this case and consider higher-weight stabilizers, beginning with weight-8.

For computational efficiency, we restrict attention to polynomials of the form
\begin{equation}
    f(x,y) = 1 + x + x^a y^b + x^c y^d,
    \label{eq:weight-8 f(x,y)}
\end{equation}
with $g(x,y):=\overline{f(x,y)}$ its antipode.

We performed a systematic search for all weight-8 self-dual BB codes, subject to twisted periodic boundary conditions, for system sizes up to $n \leq 200$. For each even $n$, we first identified all decompositions $n = 2l \times m$ and defined the twisted torus by the basis vectors $\vec{a}_1 = (0,m)$ and $\vec{a}_2 = (l,q)$ with $0 \leq q < m$. We then enumerated all polynomials of the form~\eqref{eq:weight-8 f(x,y)} with exponent pairs $(a,b)$ and $(c,d)$ lying inside the parallelogram spanned by $\vec{a}_1$ and $\vec{a}_2$. For each candidate, we computed the corresponding $k$ by Eq.~\eqref{eq: k formula}.
The evaluation of $k$ in Eq.~\eqref{eq: maximal logical dimension} is independent of system size, since the boundary conditions $y^m-1$ and $x^l y^q-1$ reduce modulo $f(x,y)$ and $g(x,y)$, leaving only remainder terms relevant to the calculation. For cases with $k>0$, we compute the code distance using the integer programming method of Refs.~\cite{landahl2011fault, Bravyi2024HighThreshold}, thereby obtaining the full $[[n,k,d]]$ parameters.

The results are summarized in Tables~\ref{tab: n_k_d 1} and~\ref{tab: n_k_d 2}. 
For each $n$, we identify the codes that maximize the metric $kd^2/n$.
When multiple codes achieve the same optimal $kd^2/n$, we select the one with the most local stabilizers to present.


\prlsection{Transversal Clifford gates}
To achieve universal fault-tolerant computation, the first step is to realize logical gates of quantum error-correcting codes.  
For any weight-8 self-dual BB code, the transversal CNOT, Hadamard, and $S$ gates act as logical Clifford operations:
\begin{enumerate}
    \item \emph{CNOT (between blocks):}  
    For any CSS code, the transversal operation $\mathrm{CNOT}^{\otimes n}$ between two code blocks $A$ and $B$ preserves the stabilizer group and acts pairwise on logical qubits as
    \begin{eqs}
        \overline X_j^{(A)}&\mapsto\overline X_j^{(A)}\overline X_j^{(B)},&
        \overline Z_j^{(B)}&\mapsto\overline Z_j^{(A)}\overline Z_j^{(B)},\\
        \overline X_j^{(B)}&\mapsto\overline X_j^{(B)},&
        \overline Z_j^{(A)}&\mapsto\overline Z_j^{(A)}. \nonumber
    \end{eqs}

    \item \emph{Hadamard (single block):}  
    For self-dual CSS codes, transversal $H^{\otimes n}$ leaves the stabilizer invariant while exchanging $X$- and $Z$-type operators.  
    When each block encodes a single qubit, the logical operators can be chosen such that transversal Hadamard performs an encoded Hadamard rotation.  
    In our case, since self-dual BB codes encode multiple logical qubits, this transversal operation implements a multi-qubit Clifford transformation.  
    In an appropriate logical basis, it acts as
    \[
        \qquad \overline X_{2j-1}\leftrightarrow\overline Z_{2j},\qquad  
        \overline Z_{2j-1}\leftrightarrow\overline X_{2j}, \quad  
        1 \le j \le \frac{k}{2},
    \]
    exchanging $X$ and $Z$ operators pairwise. 

    \begin{table}[t]
        \centering
        \renewcommand{\arraystretch}{1.5}
        \setlength{\tabcolsep}{4.5pt}
        \begin{tabular}{|c|c|c|c|c|c|c|c|c|}
        \hline 
        Pair index & 1 & 2 & 3 & 4 & 5 & 6 & 7 & 8 \\
        \hline
        Logical $\overline{X}$ & $\alpha$ & $y^2\beta$ & $y\alpha$ & $y^3\beta$ & $y^2\alpha$ & $\beta$ &  $y^3\alpha$ & $y\beta$  \\
        \hline
        Logical $\overline{Z}$ & $y^2\beta$ & $\alpha$ & $y^3\beta$ & $y\alpha$ & $\beta$ & $y^2\alpha$ &  $y\beta$ & $y^3\alpha$  \\
        \hline
        \end{tabular}
        \caption{Eight pairs of logical operators $(\overline X_i,\overline Z_i)$ for the $[[64,8,8]]$ code. Here $\overline X_1$ is the product of Pauli $X$ operators supported on the pattern $\alpha$ in Fig.~\ref{fig: [[64, 8, 8]] code}, and $\overline Z_1$ is the product of Pauli $Z$ operators supported on the pattern $y^2\beta$ (i.e., $\beta$ shifted by two lattice units in the $y$ direction). The remaining logical operators are defined analogously. In this basis, the transversal Hadamard acts as $\overline X_{2j-1}\leftrightarrow \overline Z_{2j}$ and $\overline Z_{2j-1}\leftrightarrow \overline X_{2j}$, while the transversal phase gate $S^{\otimes n}$ acts as $\overline X_{2j-1}\mapsto \overline X_{2j-1}\overline Z_{2j}$ and $\overline X_{2j}\mapsto \overline X_{2j}\overline Z_{2j-1}$, with all $\overline Z_i$ fixed.}
        \label{tab: logical pair}
    \end{table}

    \item \emph{Phase $S$ (single block):}  
    The phase gate satisfies $SXS^\dagger = Y = iXZ$.  
    If every $X$-type stabilizer has weight divisible by $4$ (i.e., the code is doubly even), transversal $S^{\otimes n}$ preserves the stabilizer group.  
    In the same logical basis, its action on encoded operators is
    \begin{eqs}
        \overline X_{2j-1} &\mapsto \overline X_{2j-1}\,\overline Z_{2j}, &
        \overline Z_{2j-1} &\mapsto \overline Z_{2j-1}, \\
        \overline X_{2j}   &\mapsto \overline X_{2j}\,\overline Z_{2j-1}, &
        \overline Z_{2j}   &\mapsto \overline Z_{2j}.
    \end{eqs}
\end{enumerate}
These transversal operations generate multi-qubit logical Clifford transformations, providing a subset of fault-tolerant Clifford gates for weight-8 self-dual BB codes.

Explicit logical operators for the $[[64,8,8]]$ code are shown in Fig.~\ref{fig: [[64, 8, 8]] code} and Table~\ref{tab: logical pair}; their algebraic derivation is given in Appendix~\ref{app: Logical operators of [[64,8,8]]}. Physically, these operators can be viewed as string operators that transport anyons around the nontrivial cycles of the torus, analogous to the toric and color codes. In this model there are four independent $e$-$m$ pairs (Fig.~\ref{fig: m_a_e_c_braiding}), explaining why $k=8$.

\begin{figure}[t]
\centering
    \includegraphics[width= \linewidth]{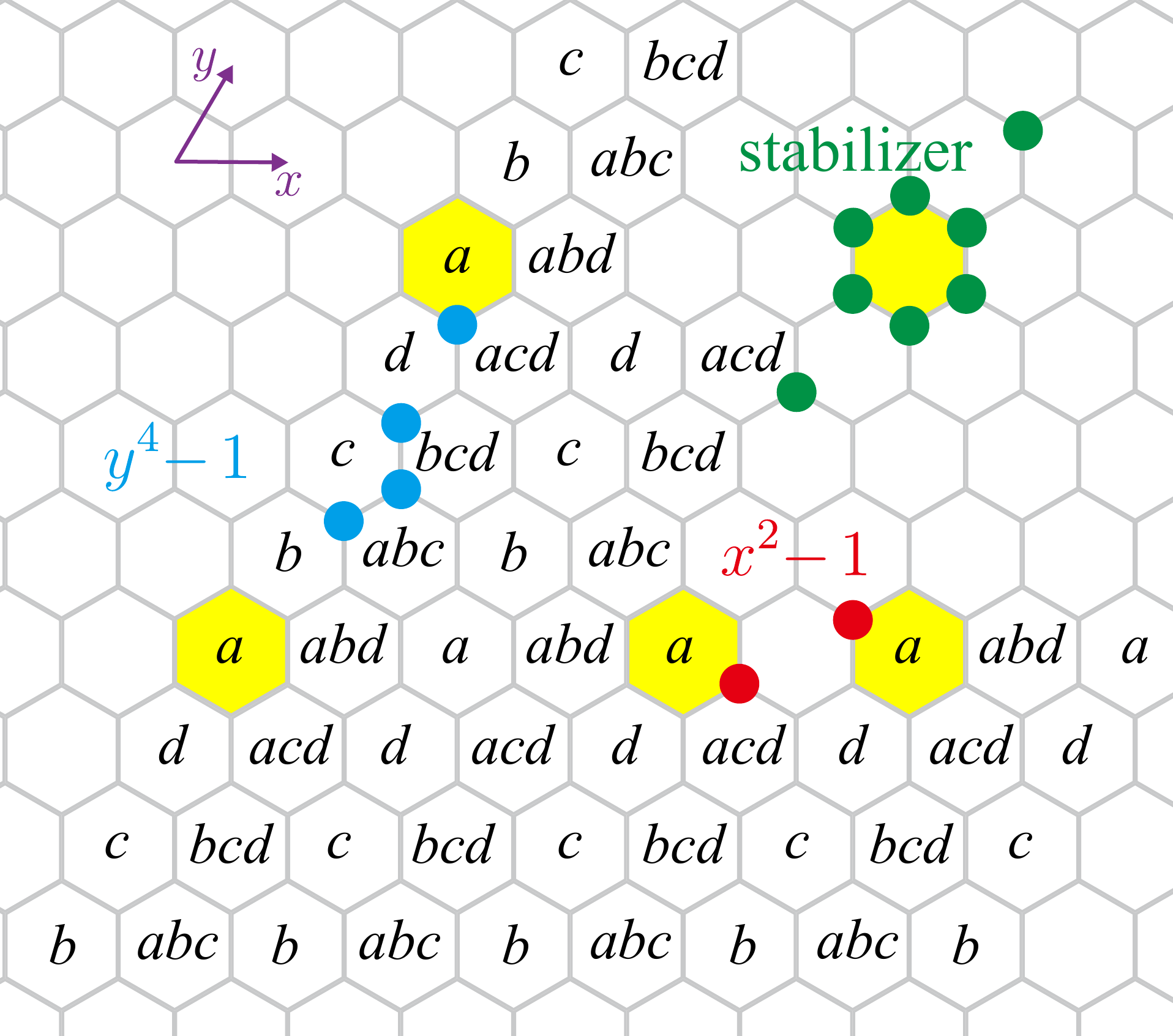}
    \caption{Mobility of anyons in the $[[64,8,8]]$ code. Yellow hexagons mark plaquettes supporting stabilizers (green dots). Violations of $Z$-type (respectively, $X$-type) stabilizers are identified as $e$ (respectively, $m$) anyons. We further refine these anyons into four flavors, labeled $a,b,c,d$; each hexagon is labeled by its flavor (or a fusion product). As in the color code, local string operators move an anyon only between hexagons of the same flavor. The red (blue) dots indicate the shortest string operators generating translations of anyons by two lattice units in $x$ (four units in $y$), corresponding to $x^2-1$ ($y^4-1$). Long strings wrapping the torus are obtained by concatenating these short strings and yield the logical-operator patterns $\alpha$ and $\beta$ in Fig.~\ref{fig: [[64, 8, 8]] code}.}
    \label{fig: m_a_e_c_braiding}
\end{figure}

\prlsection{Discussion}
A natural question is why we restrict the weight-8 search to the ansatz in Eq.~\eqref{eq:weight-8 f(x,y)} rather than the general
\begin{equation}
    f(x,y)=1+x^a y^b+x^c y^d+x^e y^f.
    \label{eq:general f(x,y)}
\end{equation}
There are two reasons. First, the general ansatz is substantially more expensive to scan: even with exponents bounded by $10$, varying the additional pair $(e,f)$ increases the search space by $\gtrsim 10^2$. Second, many instances of Eq.~\eqref{eq:general f(x,y)} are equivalent to Eq.~\eqref{eq:weight-8 f(x,y)} under an invertible monomial change of variables. For example, if $a$ and $b$ are coprime, then by B\'ezout's lemma there exist integers $a',b'$ with $ab'-ba'=\pm 1$. Then,
\begin{equation}
    x'=x^a y^b,\qquad y'=x^{a'} y^{b'},
\end{equation}
defines a unimodular change of basis and brings $f(x,y)$ into the form~\eqref{eq:weight-8 f(x,y)} in the variables $(x',y')$.
Thus Eq.~\eqref{eq:weight-8 f(x,y)} already covers a broad class of generic cases of Eq.~\eqref{eq:general f(x,y)}. Moreover, for $n=64$ we also tested the fully general ansatz~\eqref{eq:general f(x,y)} and observed no improvement in the metric $kd^2/n$.

We note that on the honeycomb lattice, any inversion-symmetric stabilizer pattern generates a self-dual BB code. As an example, the \emph{modified color code} introduced in Ref.~\cite{liang2023extracting} can be viewed as a dressed version of the standard hexagonal color code, obtained by adding inversion-symmetric vertices. The $[[64,8,8]]$ instance with $f = 1 + x + y + y^{-1}$ (referred to as the \emph{modified color code C} in Ref.~\cite{liang2023extracting}) naturally fits within this framework; however, its code parameters under periodic boundary conditions were not previously analyzed. The stabilizer defined by $f = 1 + x + y + y^{-1}$ represents the most local weight-8 construction, and given the rapid experimental progress in implementing color-code architectures, it is among the most promising candidates for near-term realization.

We also note that a code with parameters $[[64,8,8]]$ appears in Ref.~\cite{Ostrev2024IntersectingSubsets}, but that construction is not self-dual under a fixed qubit labeling (it becomes isomorphic only after column permutations). 
For comparison, a Reed-Muller CSS construction yields a self-dual $[[64,20,8]]$ code~\cite{PreskillNotesQEC}; however, its stabilizer generators have large weight (at least $16$) and the code is therefore not LDPC.

Self-dual BB codes combine high rate with symmetry and support fault-tolerant multi-qubit Clifford operations with low-weight checks and local geometry, making them a natural platform for near-term demonstrations of state preparation~\cite{iqbal2023topological,Google2024surface}, logical operations~\cite{Li2025TimeEfficient}, and repeated syndrome extraction. These attributes align with recent proposals for BB-style architectures and lattice surgery on superconducting hardware~\cite{yoder2025tour,yang2025planar}, suggesting a practical route to scalable fault-tolerant computation.

\section*{Acknowledgement}

We would like to thank Shin Ho Choe, Dongling Deng, Jens Niklas Eberhardt, Ying Li, Zhide Lu, Francisco Pereira, and Vincent Steffan for valuable discussions.

This work is supported by the National Natural Science Foundation of China (Grant No.~12474491), and the Fundamental Research Funds for the Central Universities, Peking University.

\bibliography{bib.bib}

@misc{leroux2025romanescocodesbiastailoredqldpc,
      title={Romanesco codes: Bias-tailored qLDPC codes from fractal codes}, 
      author={Catherine Leroux and Joseph K. Iverson},
      year={2025},
      eprint={2506.00130},
      archivePrefix={arXiv},
      primaryClass={quant-ph},
      url={https://arxiv.org/abs/2506.00130}, 
}

@article{haah_module_13,
	abstract = {We study unfrustrated spin Hamiltonians that consist of commuting tensor products of Pauli matrices. Assuming translation-invariance, a family of Hamiltonians that belong to the same phase of matter is described by a map between modules over the translation-group algebra, so homological methods are applicable. In any dimension every point-like charge appears as a vertex of a fractal operator, and can be isolated with energy barrier at most logarithmic in the separation distance. For a topologically ordered system in three dimensions, there must exist a point-like nontrivial charge. A connection between the ground state degeneracy and the number of points on an algebraic set is discussed. Tools to handle local Clifford unitary transformations are given.},
	author = {Haah, Jeongwan},
	da = {2013/12/01},
	date-added = {2023-09-21 00:32:52 +0800},
	date-modified = {2023-09-21 00:32:52 +0800},
	doi = {10.1007/s00220-013-1810-2},
	id = {Haah2013},
	isbn = {1432-0916},
	journal = {Communications in Mathematical Physics},
	number = {2},
	pages = {351--399},
	title = {Commuting Pauli Hamiltonians as Maps between Free Modules},
	ty = {JOUR},
	url = {https://doi.org/10.1007/s00220-013-1810-2},
	volume = {324},
	year = {2013},
	Bdsk-Url-1 = {https://doi.org/10.1007/s00220-013-1810-2}}

@article{kubica2015unfolding,
   title={Unfolding the color code},
   volume={17},
   ISSN={1367-2630},
   url={http://dx.doi.org/10.1088/1367-2630/17/8/083026},
   DOI={10.1088/1367-2630/17/8/083026},
   number={8},
   journal={New Journal of Physics},
   publisher={IOP Publishing},
   author={Kubica, Aleksander and Yoshida, Beni and Pastawski, Fernando},
   year={2015},
   month=aug, pages={083026} }

@article{iqbal2024NonAbelian,
   title={Non-Abelian topological order and anyons on a trapped-ion processor},
   volume={626},
   ISSN={1476-4687},
   url={http://dx.doi.org/10.1038/s41586-023-06934-4},
   DOI={10.1038/s41586-023-06934-4},
   number={7999},
   journal={Nature},
   publisher={Springer Science and Business Media LLC},
   author={Iqbal, Mohsin and Tantivasadakarn, Nathanan and Verresen, Ruben and Campbell, Sara L. and Dreiling, Joan M. and Figgatt, Caroline and Gaebler, John P. and Johansen, Jacob and Mills, Michael and Moses, Steven A. and Pino, Juan M. and Ransford, Anthony and Rowe, Mary and Siegfried, Peter and Stutz, Russell P. and Foss-Feig, Michael and Vishwanath, Ashvin and Dreyer, Henrik},
   year={2024},
   month=feb, pages={505–511} }

@article{bombin2006topological,
  title = {Topological Quantum Distillation},
  author = {Bombin, H. and Martin-Delgado, M. A.},
  journal = {Phys. Rev. Lett.},
  volume = {97},
  issue = {18},
  pages = {180501},
  numpages = {4},
  year = {2006},
  month = {Oct},
  publisher = {American Physical Society},
  doi = {10.1103/PhysRevLett.97.180501},
  url = {https://link.aps.org/doi/10.1103/PhysRevLett.97.180501}
}

@misc{bravyi1998quantum,
      title={Quantum codes on a lattice with boundary}, 
      author={S. B. Bravyi and A. Yu. Kitaev},
      year={1998},
      eprint={quant-ph/9811052},
      archivePrefix={arXiv},
      primaryClass={quant-ph},
}

@article{dennis2002topological,
    author = {Dennis, Eric and Kitaev, Alexei and Landahl, Andrew and Preskill, John},
    title = "{Topological quantum memory}",
    journal = {Journal of Mathematical Physics},
    volume = {43},
    number = {9},
    pages = {4452-4505},
    year = {2002},
    month = {09},
    abstract = "{We analyze surface codes, the topological quantum error-correcting codes introduced by Kitaev. In these codes, qubits are arranged in a two-dimensional array on a surface of nontrivial topology, and encoded quantum operations are associated with nontrivial homology cycles of the surface. We formulate protocols for error recovery, and study the efficacy of these protocols. An order-disorder phase transition occurs in this system at a nonzero critical value of the error rate; if the error rate is below the critical value (the accuracy threshold), encoded information can be protected arbitrarily well in the limit of a large code block. This phase transition can be accurately modeled by a three-dimensional Z2 lattice gauge theory with quenched disorder. We estimate the accuracy threshold, assuming that all quantum gates are local, that qubits can be measured rapidly, and that polynomial-size classical computations can be executed instantaneously. We also devise a robust recovery procedure that does not require measurement or fast classical processing; however, for this procedure the quantum gates are local only if the qubits are arranged in four or more spatial dimensions. We discuss procedures for encoding, measurement, and performing fault-tolerant universal quantum computation with surface codes, and argue that these codes provide a promising framework for quantum computing architectures.}",
    issn = {0022-2488},
    doi = {10.1063/1.1499754},
    url = {https://doi.org/10.1063/1.1499754},
}

@article{kitaev2003fault,
title = {Fault-tolerant quantum computation by anyons},
journal = {Annals of Physics},
volume = {303},
number = {1},
pages = {2-30},
year = {2003},
issn = {0003-4916},
doi = {https://doi.org/10.1016/S0003-4916(02)00018-0},
url = {https://www.sciencedirect.com/science/article/pii/S0003491602000180},
author = {A.Yu. Kitaev},
abstract = {A two-dimensional quantum system with anyonic excitations can be considered as a quantum computer. Unitary transformations can be performed by moving the excitations around each other. Measurements can be performed by joining excitations in pairs and observing the result of fusion. Such computation is fault-tolerant by its physical nature.}
}

@article{watanabe2023ground,
    author = {Watanabe, Haruki and Cheng, Meng and Fuji, Yohei},
    title = {Ground state degeneracy on torus in a family of $\mathbb{Z}_N$ toric code},
    journal = {Journal of Mathematical Physics},
    volume = {64},
    number = {5},
    pages = {051901},
    year = {2023},
    month = {05},
    issn = {0022-2488},
    doi = {10.1063/5.0134010},
    url = {https://doi.org/10.1063/5.0134010}
}

@article{google2023suppressing,
	author = {Google Quantum AI and Collaborators},
	da = {2023/02/01},
	doi = {10.1038/s41586-022-05434-1},
	id = {Acharya2023},
	isbn = {1476-4687},
	journal = {Nature},
	number = {7949},
	pages = {676--681},
	title = {Suppressing quantum errors by scaling a surface code logical qubit},
	ty = {JOUR},
	url = {https://doi.org/10.1038/s41586-022-05434-1},
	volume = {614},
	year = {2023},
	Bdsk-Url-1 = {https://doi.org/10.1038/s41586-022-05434-1}}

@article{bluvstein2022quantum,
   title={A quantum processor based on coherent transport of entangled atom arrays},
   volume={604},
   ISSN={1476-4687},
   url={http://dx.doi.org/10.1038/s41586-022-04592-6},
   DOI={10.1038/s41586-022-04592-6},
   number={7906},
   journal={Nature},
   publisher={Springer Science and Business Media LLC},
   author={Bluvstein, Dolev and Levine, Harry and Semeghini, Giulia and Wang, Tout T. and Ebadi, Sepehr and Kalinowski, Marcin and Keesling, Alexander and Maskara, Nishad and Pichler, Hannes and Greiner, Markus and Vuletić, Vladan and Lukin, Mikhail D.},
   year={2022},
   month=apr, pages={451–456} }

@article{iqbal2023topological,
	author = {Iqbal, Mohsin and Tantivasadakarn, Nathanan and Gatterman, Thomas M. and Gerber, Justin A. and Gilmore, Kevin and Gresh, Dan and Hankin, Aaron and Hewitt, Nathan and Horst, Chandler V. and Matheny, Mitchell and Mengle, Tanner and Neyenhuis, Brian and Vishwanath, Ashvin and Foss-Feig, Michael and Verresen, Ruben and Dreyer, Henrik},
	da = {2024/06/25},
	doi = {10.1038/s42005-024-01698-3},
	id = {Iqbal2024},
	isbn = {2399-3650},
	journal = {Communications Physics},
	number = {1},
	pages = {205},
	title = {Topological order from measurements and feed-forward on a trapped ion quantum computer},
	ty = {JOUR},
	url = {https://doi.org/10.1038/s42005-024-01698-3},
	volume = {7},
	year = {2024},
	Bdsk-Url-1 = {https://doi.org/10.1038/s42005-024-01698-3}}

@misc{gottesman1997stabilizer,
      title={Stabilizer Codes and Quantum Error Correction}, 
      author={Daniel Gottesman},
      year={1997},
      eprint={quant-ph/9705052},
      archivePrefix={arXiv},
      primaryClass={quant-ph}
}

@article{breuckmann2021quantum,
  title = {Quantum Low-Density Parity-Check Codes},
  author = {Breuckmann, Nikolas P. and Eberhardt, Jens Niklas},
  journal = {PRX Quantum},
  volume = {2},
  issue = {4},
  pages = {040101},
  numpages = {19},
  year = {2021},
  month = {Oct},
  publisher = {American Physical Society},
  doi = {10.1103/PRXQuantum.2.040101},
  url = {https://link.aps.org/doi/10.1103/PRXQuantum.2.040101}
}

@article{haah2016algebraic,
  title={Algebraic methods for quantum codes on lattices},
  author={Haah, Jeongwan},
  journal={Revista colombiana de matematicas},
  volume={50},
  number={2},
  pages={299--349},
  year={2016},
  publisher={Universidad Nacional de Colombia y Sociedad Colombiana de Matem{\'a}ticas},
  url = {https://doi.org/10.15446/recolma.v50n2.62214}
}

@article{liang2023extracting,
  title = {Extracting Topological Orders of Generalized Pauli Stabilizer Codes in Two Dimensions},
  author = {Liang, Zijian and Xu, Yijia and Iosue, Joseph T. and Chen, Yu-An},
  journal = {PRX Quantum},
  volume = {5},
  issue = {3},
  pages = {030328},
  numpages = {36},
  year = {2024},
  month = {Aug},
  publisher = {American Physical Society},
  doi = {10.1103/PRXQuantum.5.030328},
  url = {https://link.aps.org/doi/10.1103/PRXQuantum.5.030328}
}

@article{Shor1995Scheme,
  title = {Scheme for reducing decoherence in quantum computer memory},
  author = {Shor, Peter W.},
  journal = {Phys. Rev. A},
  volume = {52},
  issue = {4},
  pages = {R2493--R2496},
  numpages = {0},
  year = {1995},
  month = {Oct},
  publisher = {American Physical Society},
  doi = {10.1103/PhysRevA.52.R2493},
  url = {https://link.aps.org/doi/10.1103/PhysRevA.52.R2493}
}

@article{Steane1996Error,
  title = {Error Correcting Codes in Quantum Theory},
  author = {Steane, A. M.},
  journal = {Phys. Rev. Lett.},
  volume = {77},
  issue = {5},
  pages = {793--797},
  numpages = {0},
  year = {1996},
  month = {Jul},
  publisher = {American Physical Society},
  doi = {10.1103/PhysRevLett.77.793},
  url = {https://link.aps.org/doi/10.1103/PhysRevLett.77.793}
}

@article{Knill1997Theory,
  title = {Theory of quantum error-correcting codes},
  author = {Knill, Emanuel and Laflamme, Raymond},
  journal = {Phys. Rev. A},
  volume = {55},
  issue = {2},
  pages = {900--911},
  numpages = {0},
  year = {1997},
  month = {Feb},
  publisher = {American Physical Society},
  doi = {10.1103/PhysRevA.55.900},
  url = {https://link.aps.org/doi/10.1103/PhysRevA.55.900}
}

@article{Google2023NonAbelian,
	author = {Google Quantum AI and Collaborators},
	da = {2023/06/01},
	date-added = {2024-08-24 23:37:52 +0800},
	date-modified = {2024-08-24 23:37:52 +0800},
	doi = {10.1038/s41586-023-05954-4},
	id = {Andersen2023},
	isbn = {1476-4687},
	journal = {Nature},
	number = {7964},
	pages = {264--269},
	title = {Non-Abelian braiding of graph vertices in a superconducting processor},
	ty = {JOUR},
	url = {https://doi.org/10.1038/s41586-023-05954-4},
	volume = {618},
	year = {2023},
	Bdsk-Url-1 = {https://doi.org/10.1038/s41586-023-05954-4}}

@article{Bravyi2024HighThreshold,
	author = {Bravyi, Sergey and Cross, Andrew W. and Gambetta, Jay M. and Maslov, Dmitri and Rall, Patrick and Yoder, Theodore J.},
	da = {2024/03/01},
	date-added = {2024-09-22 19:22:21 +0800},
	date-modified = {2024-09-22 19:22:21 +0800},
	doi = {10.1038/s41586-024-07107-7},
	id = {Bravyi2024},
	isbn = {1476-4687},
	journal = {Nature},
	number = {8005},
	pages = {778--782},
	title = {High-threshold and low-overhead fault-tolerant quantum memory},
	ty = {JOUR},
	url = {https://doi.org/10.1038/s41586-024-07107-7},
	volume = {627},
	year = {2024},
	Bdsk-Url-1 = {https://doi.org/10.1038/s41586-024-07107-7}}

@article{Verresen2021PredictionTC,
  title = {Prediction of Toric Code Topological Order from Rydberg Blockade},
  author = {Verresen, Ruben and Lukin, Mikhail D. and Vishwanath, Ashvin},
  journal = {Phys. Rev. X},
  volume = {11},
  issue = {3},
  pages = {031005},
  numpages = {23},
  year = {2021},
  month = {Jul},
  publisher = {American Physical Society},
  doi = {10.1103/PhysRevX.11.031005},
  url = {https://link.aps.org/doi/10.1103/PhysRevX.11.031005}
}

@misc{wang2024coprime,
      title={Coprime Bivariate Bicycle Codes}, 
      author={Ming Wang and Frank Mueller},
      year={2024},
      eprint={2408.10001},
      archivePrefix={arXiv},
      primaryClass={quant-ph},
}

@misc{eberhardt2024logical,
      title={Logical Operators and Fold-Transversal Gates of Bivariate Bicycle Codes}, 
      author={Jens Niklas Eberhardt and Vincent Steffan},
      year={2024},
      eprint={2407.03973},
      archivePrefix={arXiv},
      primaryClass={quant-ph},
}

@article{shaw2024lowering,
  title = {Lowering Connectivity Requirements for Bivariate Bicycle Codes Using Morphing Circuits},
  author = {Shaw, Mackenzie H. and Terhal, Barbara M.},
  journal = {Phys. Rev. Lett.},
  volume = {134},
  issue = {9},
  pages = {090602},
  numpages = {5},
  year = {2025},
  month = {Mar},
  publisher = {American Physical Society},
  doi = {10.1103/PhysRevLett.134.090602},
  url = {https://link.aps.org/doi/10.1103/PhysRevLett.134.090602}
}

@misc{maan2024machine,
      title={Machine Learning Message-Passing for the Scalable Decoding of QLDPC Codes}, 
      author={Arshpreet Singh Maan and Alexandru Paler},
      year={2024},
      eprint={2408.07038},
      archivePrefix={arXiv},
      primaryClass={quant-ph},
}

@misc{cross2024linear,
      title={Improved QLDPC Surgery: Logical Measurements and Bridging Codes}, 
      author={Andrew Cross and Zhiyang He and Patrick Rall and Theodore Yoder},
      year={2024},
      eprint={2407.18393},
      archivePrefix={arXiv},
      primaryClass={quant-ph},
}

@misc{cowtan2024ssip,
      title={SSIP: automated surgery with quantum LDPC codes}, 
      author={Alexander Cowtan},
      year={2024},
      eprint={2407.09423},
      archivePrefix={arXiv},
      primaryClass={quant-ph},
}

@misc{wolanski2024ambiguity,
      title={Ambiguity Clustering: an accurate and efficient decoder for qLDPC codes}, 
      author={Stasiu Wolanski and Ben Barber},
      year={2024},
      eprint={2406.14527},
      archivePrefix={arXiv},
      primaryClass={quant-ph},
}

@article{berthusen2025toward,
  title = {Toward a 2D Local Implementation of Quantum Low-Density Parity-Check Codes},
  author = {Berthusen, Noah and Devulapalli, Dhruv and Schoute, Eddie and Childs, Andrew M. and Gullans, Michael J. and Gorshkov, Alexey V. and Gottesman, Daniel},
  journal = {PRX Quantum},
  volume = {6},
  issue = {1},
  pages = {010306},
  numpages = {18},
  year = {2025},
  month = {Jan},
  publisher = {American Physical Society},
  doi = {10.1103/PRXQuantum.6.010306},
  url ={https://link.aps.org/doi/10.1103/PRXQuantum.6.010306}
}

@misc{gong2024toward,
      title={Toward Low-latency Iterative Decoding of QLDPC Codes Under Circuit-Level Noise}, 
      author={Anqi Gong and Sebastian Cammerer and Joseph M. Renes},
      year={2024},
      eprint={2403.18901},
      archivePrefix={arXiv},
      primaryClass={quant-ph},
}

@misc{liang2024operator,
      title={Operator algebra and algorithmic construction of boundaries and defects in (2+1)D topological Pauli stabilizer codes}, 
      author={Zijian Liang and Bowen Yang and Joseph T. Iosue and Yu-An Chen},
      year={2024},
      eprint={2410.11942},
      archivePrefix={arXiv},
      primaryClass={quant-ph},
}

@misc{tiew2024low,
      title={Low-Overhead Entangling Gates from Generalised Dehn Twists}, 
      author={Ryan Tiew and Nikolas P. Breuckmann},
      year={2024},
      eprint={2411.03302},
      archivePrefix={arXiv},
      primaryClass={quant-ph},
}

@article{Wen1995Topological,
	author = {Xiao-Gang Wen},
	doi = {10.1080/00018739500101566},
	journal = {Advances in Physics},
	number = {5},
	pages = {405--473},
	publisher = {Taylor \& Francis},
	title = {Topological orders and edge excitations in fractional quantum Hall states},
        url = {https://doi.org/10.1080/00018739500101566},
	volume = {44},
	year = {1995},
	bdsk-url-1 = {https://doi.org/10.1080/00018739500101566}
}

@article{Witten1989Jones,
	author = {Witten, Edward},
	date = {1989/09/01},
	date-added = {2025-02-05 16:50:20 +0800},
	date-modified = {2025-02-05 16:50:20 +0800},
	doi = {10.1007/BF01217730},
	id = {Witten1989},
	isbn = {1432-0916},
	journal = {Communications in Mathematical Physics},
	number = {3},
	pages = {351--399},
	title = {Quantum field theory and the Jones polynomial},
	url = {https://doi.org/10.1007/BF01217730},
	volume = {121},
	year = {1989},
	bdsk-url-1 = {https://doi.org/10.1007/BF01217730}}

@article{Buchberger1965algorithm,
	author = {Bruno Buchberger},
	date-added = {2025-02-06 11:12:22 +0800},
	date-modified = {2025-02-06 11:12:22 +0800},
	doi = {https://doi.org/10.1016/j.jsc.2005.09.007},
	issn = {0747-7171},
	journal = {Journal of Symbolic Computation},
	number = {3},
	pages = {475-511},
	title = {Bruno Buchberger's PhD thesis 1965: An algorithm for finding the basis elements of the residue class ring of a zero dimensional polynomial ideal},
	url = {https://www.sciencedirect.com/science/article/pii/S0747717105001483},
	volume = {41},
	year = {2006},
	bdsk-url-1 = {https://www.sciencedirect.com/science/article/pii/S0747717105001483},
	bdsk-url-2 = {https://doi.org/10.1016/j.jsc.2005.09.007}}

@misc{landahl2011fault,
      title={Fault-tolerant quantum computing with color codes}, 
      author={Andrew J. Landahl and Jonas T. Anderson and Patrick R. Rice},
      year={2011},
      eprint={1108.5738},
      archivePrefix={arXiv},
      primaryClass={quant-ph},
}

@misc{voss2024multivariate,
      title={Multivariate Bicycle Codes}, 
      author={Lukas Voss and Sim Jian Xian and Tobias Haug and Kishor Bharti},
      year={2025},
      eprint={2406.19151},
      archivePrefix={arXiv},
      primaryClass={quant-ph},
}

@INPROCEEDINGS{Wang2024Bivariate,
  author={Wang, Ming and Mueller, Frank},
  booktitle={2024 IEEE International Conference on Quantum Computing and Engineering (QCE)}, 
  title={Rate Adjustable Bivariate Bicycle Codes for Quantum Error Correction}, 
  year={2024},
  volume={02},
  number={},
  pages={412-413},
  doi={10.1109/QCE60285.2024.10331},
  publisher = {IEEE Computer Society},
  address={Montreal, QC, Canada}
}

@article{Google2024surface,
	author = {Google Quantum AI and Collaborators},
	da = {2025/02/01},
	doi = {10.1038/s41586-024-08449-y},
	id = {Acharya2025},
	isbn = {1476-4687},
	journal = {Nature},
	number = {8052},
	pages = {920--926},
	title = {Quantum error correction below the surface code threshold},
	ty = {JOUR},
	url = {https://doi.org/10.1038/s41586-024-08449-y},
	volume = {638},
	year = {2025},
	Bdsk-Url-1 = {https://doi.org/10.1038/s41586-024-08449-y}}

@article{semeghini2021probing,
	author = {G. Semeghini and H. Levine and A. Keesling and S. Ebadi and T. T. Wang and D. Bluvstein and R. Verresen and H. Pichler and M. Kalinowski and R. Samajdar and A. Omran and S. Sachdev and A. Vishwanath and M. Greiner and V. Vuleti{\'c} and M. D. Lukin},
	date-added = {2025-02-12 00:48:21 +0800},
	date-modified = {2025-02-12 00:48:21 +0800},
	doi = {10.1126/science.abi8794},
	journal = {Science},
	number = {6572},
	pages = {1242-1247},
	title = {Probing topological spin liquids on a programmable quantum simulator},
	url = {https://www.science.org/doi/abs/10.1126/science.abi8794},
	volume = {374},
	year = {2021}}

@article{Bravyi2010Tradeoffs,
  title = {Tradeoffs for Reliable Quantum Information Storage in 2D Systems},
  author = {Bravyi, Sergey and Poulin, David and Terhal, Barbara},
  journal = {Phys. Rev. Lett.},
  volume = {104},
  issue = {5},
  pages = {050503},
  numpages = {4},
  year = {2010},
  month = {Feb},
  publisher = {American Physical Society},
  doi = {10.1103/PhysRevLett.104.050503},
  url = {https://link.aps.org/doi/10.1103/PhysRevLett.104.050503}
}

@article{bravyi2009no,
	author = {Bravyi, Sergey and Terhal, Barbara},
	date-added = {2025-02-12 23:43:30 +0800},
	date-modified = {2025-02-12 23:43:30 +0800},
	doi = {10.1088/1367-2630/11/4/043029},
	journal = {New Journal of Physics},
	month = {apr},
	number = {4},
	pages = {043029},
	title = {A no-go theorem for a two-dimensional self-correcting quantum memory based on stabilizer codes},
	url = {https://dx.doi.org/10.1088/1367-2630/11/4/043029},
	volume = {11},
	year = {2009},
	bdsk-url-1 = {https://dx.doi.org/10.1088/1367-2630/11/4/043029}}

@misc{eberhardt2024pruning,
      title={Pruning qLDPC codes: Towards bivariate bicycle codes with open boundary conditions}, 
      author={Jens Niklas Eberhardt and Francisco Revson F. Pereira and Vincent Steffan},
      year={2024},
      eprint={2412.04181},
      archivePrefix={arXiv},
      primaryClass={quant-ph},
}

@misc{lin2025single,
      title={Single-shot and two-shot decoding with generalized bicycle codes}, 
      author={Hsiang-Ku Lin and Xingrui Liu and Pak Kau Lim and Leonid P. Pryadko},
      year={2025},
      eprint={2502.19406},
      archivePrefix={arXiv},
      primaryClass={quant-ph},
}

@article{chen2025anyon,
  title = {Anyon Theory and Topological Frustration of High-Efficiency Quantum Low-Density Parity-Check Codes},
  author = {Chen, Keyang and Liu, Yuanting and Zhang, Yiming and Liang, Zijian and Chen, Yu-An and Liu, Ke and Song, Hao},
  journal = {Phys. Rev. Lett.},
  volume = {135},
  issue = {7},
  pages = {076603},
  numpages = {8},
  year = {2025},
  month = {Aug},
  publisher = {American Physical Society},
  doi = {10.1103/86j7-cmsw},
  url = {https://link.aps.org/doi/10.1103/86j7-cmsw}
}

@article{liang2025generalized,
  title = {Generalized Toric Codes on Twisted Tori for Quantum Error Correction},
  author = {Liang, Zijian and Liu, Ke and Song, Hao and Chen, Yu-An},
  journal = {PRX Quantum},
  volume = {6},
  issue = {2},
  pages = {020357},
  numpages = {22},
  year = {2025},
  month = {Jun},
  publisher = {American Physical Society},
  doi = {10.1103/rmy6-9n89},
  url = {https://link.aps.org/doi/10.1103/rmy6-9n89}
}

@article{wang2025demonstration,
  title={Demonstration of low-overhead quantum error correction codes},
  author={Wang, Ke and Lu, Zhide and Zhang, Chuanyu and Liu, Gongyu and Chen, Jiachen and Wang, Yanzhe and Wu, Yaozu and Xu, Shibo and Zhu, Xuhao and Jin, Feitong and others},
  journal={arXiv preprint arXiv:2505.09684},
  year={2025},
}

@article{liang2025planar,
  title={Planar quantum low-density parity-check codes with open boundaries},
  author={Liang, Zijian and Eberhardt, Jens Niklas and Chen, Yu-An},
  journal={arXiv preprint arXiv:2504.08887},
  year={2025},
}

@article{Lacroix2025ColorCode,
  author  = {Google Quantum AI and Collaborators},
  title   = {Scaling and logic in the colour code on a superconducting quantum processor},
  journal = {Nature},
  year    = {2025},
  volume  = {645},
  number  = {8081},
  pages   = {614-619},
  doi     = {10.1038/s41586-025-09061-4}
}

@article{Calderbank1997Quantum,
  title = {Quantum Error Correction and Orthogonal Geometry},
  author = {Calderbank, A. R. and Rains, E. M. and Shor, P. W. and Sloane, N. J. A.},
  journal = {Phys. Rev. Lett.},
  volume = {78},
  issue = {3},
  pages = {405--408},
  numpages = {0},
  year = {1997},
  month = {Jan},
  publisher = {American Physical Society},
  doi = {10.1103/PhysRevLett.78.405},
  url = {https://link.aps.org/doi/10.1103/PhysRevLett.78.405}
}

@article{Bernshtein1975,
	author = {Bernshtein, D.  N. },
	date = {1975/07/01},
	doi = {10.1007/BF01075595},
	id = {Bernshtein1975},
	isbn = {1573-8485},
	journal = {Functional Analysis and Its Applications},
	number = {3},
	pages = {183--185},
	title = {The number of roots of a system of equations},
	url = {https://doi.org/10.1007/BF01075595},
	volume = {9},
	year = {1975},
	bdsk-url-1 = {https://doi.org/10.1007/BF01075595}}

@book{Sturmfels2002,
  author    = {Sturmfels, Bernd},
  title     = {Solving Systems of Polynomial Equations},
  series    = {CBMS Regional Conference Series in Mathematics},
  volume    = {97},
  publisher = {American Mathematical Society},
  address   = {Providence, RI},
  year      = {2002},
  isbn      = {978-0-8218-3259-1}
}

@ARTICLE{Ostrev2024IntersectingSubsets,
  author={Ostrev, Dimiter},
  journal={IEEE Transactions on Information Theory}, 
  title={Quantum LDPC Codes From Intersecting Subsets}, 
  year={2024},
  volume={70},
  number={8},
  pages={5692-5709},
  keywords={Codes;Parity check codes;Fault tolerant systems;Fault tolerance;Circuits;Encoding;Measurement uncertainty;Quantum LDPC codes;Reed-Muller codes},
  doi={10.1109/TIT.2024.3402091}}

@misc{PreskillNotesQEC,
  author       = {John Preskill},
  title        = {Lecture notes on Quantum Computation, Chapter 7: Quantum Error Correction},
  howpublished = {\url{https://www.preskill.caltech.edu/ph229/notes/chap7.pdf}},
  year         = {1998--2020}
}

@misc{yoder2025tour,
      title={Tour de gross: A modular quantum computer based on bivariate bicycle codes}, 
      author={Theodore J. Yoder and Eddie Schoute and Patrick Rall and Emily Pritchett and Jay M. Gambetta and Andrew W. Cross and Malcolm Carroll and Michael E. Beverland},
      year={2025},
      eprint={2506.03094},
      archivePrefix={arXiv},
      primaryClass={quant-ph},
}

@misc{yang2025planar,
      title={Planar Fault-Tolerant Quantum Computation with Low Overhead}, 
      author={Yingli Yang and Guo Zhang and Ying Li},
      year={2025},
      eprint={2506.18061},
      archivePrefix={arXiv},
      primaryClass={quant-ph},
}

@article{Li2025TimeEfficient,
  title = {Time-Efficient Logical Operations on Quantum Low-Density Parity Check Codes},
  author = {Zhang, Guo and Li, Ying},
  journal = {Phys. Rev. Lett.},
  volume = {134},
  issue = {7},
  pages = {070602},
  numpages = {6},
  year = {2025},
  month = {Feb},
  publisher = {American Physical Society},
  doi = {10.1103/PhysRevLett.134.070602},
  url = {https://link.aps.org/doi/10.1103/PhysRevLett.134.070602}
}

@misc{zhang2025programmableanyon,
      title={Programmable Anyon Mobility through Higher Order Cellular Automata}, 
      author={Jie-Yu Zhang and Peng Ye},
      year={2025},
      eprint={2508.13961},
      archivePrefix={arXiv},
      primaryClass={quant-ph},
}

\newpage

\appendix

\section{Logical operators of the $[[64,8,8]]$ code}\label{app: Logical operators of [[64,8,8]]}

In this appendix we derive explicit representatives of the logical operators for the $[[64,8,8]]$ code introduced in the main text. The code is defined on a twisted torus with boundary conditions
\[
x^4y^4=1,\qquad y^8=1,
\]
and stabilizer polynomials (written in a form convenient for Gr\"obner-basis computations)
\begin{eqs}
    f(x,y) &= (1+x+y+y^{-1})\,y,\\
    g(x,y) &= (1+x^{-1}+y^{-1}+y)\,x y.
\label{eq:3_-3_BB_f_g}
\end{eqs}
Let $R=\mathbb{Z}_2[x^{\pm1},y^{\pm1}]$ and $I:=\langle f,g\rangle\subset R$.

\paragraph{Anyon equation.}
Logical string operators correspond to solutions of the (generalized) anyon equation~\cite{liang2023extracting}
\begin{eqs}
  &(x^4 y^4 +1)\,b_x(x,y)\;+\;(y^{8}+1)\,b_y(x,y)\\
  &=p_{1}(x,y)\,g(x,y)\;+\;p_{2}(x,y)\,f(x,y),
\label{eq: (3, -3)-BB frustrated anyon equation}
\end{eqs}
for some $b_x,b_y,p_1,p_2\in R$. Equivalently, we seek $b_x,b_y\in R$ such that
\begin{equation}
    (x^4 y^4 +1)\,b_x(x,y)+(y^{8}+1)\,b_y(x,y)=0\;\in R/I,
\label{eq: (3, -3)-BB anyon equation mod I}
\end{equation}
with $I=\langle f,g\rangle$.

\paragraph{Gr\"obner basis and logical dimension.}
With monomial order $x>y$, a Gr\"obner basis of $I$ is
\begin{eqs}
    h(x,y) &= 1 + y^4,\\
    i(x,y) &= 1 + x + y + y^3,
\label{eq: (3, -3)-BB code Grobner basis 1}
\end{eqs}
where
\begin{eqs}
    h &= GB_{1,1}\,f + GB_{1,2}\,g, \\
    i &= GB_{2,1}\,f + GB_{2,2}\,g,
\end{eqs}
with coefficients
\begin{eqs}
    GB_{1,1} &= 1 + y + y^2, &\quad GB_{1,2} = y, \\
    GB_{2,1} &= 1 + y, &\quad GB_{2,2} = 1.
\end{eqs}
The standard monomials of $\langle h,i\rangle$ are $1,y,y^2,y^3$, so $\dim_{\mathbb{Z}_2}(R/I)=4$ and hence $k_{\max}=8$.  

Moreover,
\[
y^8+1=(1+y^4)\,h\in I,
\]
and $x^4y^4+1$ also reduces to zero modulo $\langle h,i\rangle$ (see Eq.~\eqref{eq: x4y4+1 into h and i} below). Therefore the boundary conditions do not further reduce the logical dimension, and the code achieves $k=8$.

\paragraph{Two independent cycle relations.}
Since both $x^4y^4+1$ and $y^8+1$ vanish in $R/I$, Eq.~\eqref{eq: (3, -3)-BB anyon equation mod I} is satisfied for any $(b_x,b_y)$. To obtain explicit logical representatives, it is convenient to extract two independent relations, corresponding to $(b_x,b_y)=(1,0)$ and $(0,1)$.

We first consider $(b_x,b_y)=(1,0)$, the string operator along the $x$-cycle.
Reducing $x^4y^4+1$ by the Gr\"obner basis~\eqref{eq: (3, -3)-BB code Grobner basis 1} gives
\begin{equation}
    x^4 y^4 +1 = q_h(x,y)\,h(x,y) + q_i(x,y)\,i(x,y)\pmod{2},
    \label{eq: x4y4+1 into h and i}
\end{equation}
with
\begin{eqs}
    q_h(x,y) &= x^4 + y^2 + x^2 y^2,\\
    q_i(x,y) &= 1 + x + x^2 + x^3 + y + x^2 y + y^3 + x^2 y^3.
    \nonumber
\end{eqs}
Combining this with the expressions of $h$ and $i$ in terms of $f$ and $g$ yields
\begin{equation}
    x^4 y^4 + 1 = p_1(x,y)\,g(x,y) + p_2(x,y)\,f(x,y)\pmod{2},
\end{equation}
where
\begin{eqs}
    p_1(x,y) &= q_h\,GB_{1,2} + q_i\,GB_{2,2} \\
             &= 1 + x + x^2 + x^3 + y + x^2 y + x^4 y,\\
    p_2(x,y) &= q_h\,GB_{1,1} + q_i\,GB_{2,1} \\
             &= 1 + x + x^2 + x^3 + x^4 + x y + x^3 y + x^4 y + x^4 y^2.
\label{eq: bx1_coeffs}
\end{eqs}

Next, we consider $(b_x,b_y)=(0,1)$, the string operator along the $y$-cycle.
Since $y^8+1=(1+y^4)\,h$, we immediately obtain
\begin{equation}
    1 + y^8 = p_1(x,y)\,g(x,y) + p_2(x,y)\,f(x,y)\pmod{2},
\end{equation}
with
\begin{eqs}
    p_1(x,y) &= (1 + y^4)\,GB_{1,2} = y + y^5,\\
    p_2(x,y) &= (1 + y^4)\,GB_{1,1} = 1 + y + y^2 + y^4 + y^5 + y^6.
\label{eq: by1_coeffs}
\end{eqs}

\paragraph{Return to the main-text conventions.}
The main text uses
\begin{eqs}
    f_0(x,y) &= 1 + x + y + y^{-1},\\
    g_0(x,y) &= 1 + x^{-1} + y^{-1} + y.
    \label{eq: original S}
\end{eqs}
Since Eq.~\eqref{eq:3_-3_BB_f_g} differs from $(f_0,g_0)$ by monomial prefactors (translations), the representatives in
Eqs.~\eqref{eq: bx1_coeffs} and~\eqref{eq: by1_coeffs} are converted to 
\begin{eqs}
    p_1'(x,y) := \frac{p_1(x,y)}{y}, \quad
    p_2'(x,y) := \frac{p_2(x,y)}{x y}.
    \label{eq: final logical}
\end{eqs}
Substituting the solution in Eq.~\eqref{eq: bx1_coeffs} into Eq.~\eqref{eq: final logical}, we obtain
\begin{eqs}
    p_1'(x,y) &= y^{-1} + x y^{-1} + x^2 y^{-1} + x^3 y^{-1} + 1 + x^2 + x^4,\\
    p_2'(x,y) &= x^{-1} y^{-1} + y^{-1} + x y^{-1} + x^2 y^{-1} + x^3 y^{-1} + 1 \nonumber\\
             &\quad + x^2 + x^3 + x^3 y~.
\end{eqs}
Up to multiplication by stabilizers, we may further simplify $(p_1',p_2')$ by adding a common polynomial multiple
of $(f_0,g_0)$:
\begin{equation*}
    (p_1',p_2') \sim \big(p_1' + s\,f_0,\; p_2' + s\,g_0\big)~, \quad \forall ~s(x,y)\in R~.
\end{equation*}
Choosing $s(x,y)=y^{-1}+x^2 y^{-1}$ yields an equivalent representative with lower weight,
\begin{eqs}
    \tilde{p}_1(x,y) :=& ~p_1'(x,y) + (y^{-1}+x^2 y^{-1})\, f_0(x,y) \\
    =& ~y^{-2} + x^2 y^{-2} + x^4~,\\
    \tilde{p}_2(x,y) :=& ~p_2'(x,y) + (y^{-1}+x^2 y^{-1})\, g_0(x,y) \\
    =& ~x^{3} y^{-1} + x^{3} + x^{3} y + y^{-2} + x^2 y^{-2}~.
\label{eq: bx1_coeffs 1}
\end{eqs}
This corresponds to the pattern $\alpha$ shown in Fig.~\ref{fig: [[64, 8, 8]] code}.
For example, according to Table~\ref{tab: logical pair}, the logical operator $\overline{X}_1$ is given by the product of $X$ operators supported on the pattern $\alpha$:
\begin{equation}
    \overline{X}_1 = \begin{bmatrix}\tilde{p}_1(x,y) \\ \tilde{p}_2(x,y) \\ \hline 0 \\ 0\end{bmatrix}.
\end{equation}

Similarly, substituting the solution in Eq.~\eqref{eq: by1_coeffs} into Eq.~\eqref{eq: final logical}, we find
\begin{eqs}
    p_1'(x,y) &=  1 + y^4,\\
    p_2'(x,y) &= x^{-1}y^{-1} + x^{-1} + x^{-1}y 
    \\
    & \quad +x^{-1}y^3 + x^{-1}y^4 + x^{-1}y^5~.
\label{eq: by1_coeffs 1}
\end{eqs}
This corresponds to the logical pattern $\beta$ shown in Fig.~\ref{fig: [[64, 8, 8]] code}.
For example, according to Table~\ref{tab: logical pair}, the logical operator $\overline{Z}_5$ is given by the product of $Z$ operators supported on the pattern $\beta$:
\begin{equation}
    \overline{Z}_5 = \begin{bmatrix}0 \\ 0 \\ \hline p_1'(x,y) \\ p_2'(x,y) \end{bmatrix}.
\end{equation}

Finally, we verify the commutation relations between these logical operators. A direct computation gives
\begin{eqs}
    &\overline{\tilde{p}_1(x,y)} ~p_1'(x,y) + \overline{\tilde{p}_2(x,y)} ~p_2'(x,y) \\
    &= y^2 + y^6 + x^2 (y^2 + y^6) \\
    &\quad+ x ~(y + y^2 + y^3 + y^5 + y^6 + y^7) \\
    &\quad+ x^3 (y + y^2 + y^3 + y^5 + y^6 + y^7)~,
\end{eqs}
where all coefficients are reduced modulo $2$, and we work modulo the ideal generated by the boundary conditions
$y^8-1$ and $x^4y^4-1$ (equivalently, we impose $y^8=1$ and $x^4=y^4$).
The first two terms indicate that an $X$ operator supported on $\alpha$ anticommutes with a $Z$ operator supported on $y^n\beta$ only for $n=2,6$. Therefore, the logical operators listed in Table~\ref{tab: logical pair} satisfy
\begin{equation}
    \overline{X}_i \overline{Z}_j = (-1)^{\delta_{i,j}} \overline{Z}_j \overline{X}_i~,
\end{equation}
as required.






\end{document}